\crefname{property}{Property}{Properties}
\crefname{example}{Example}{Examples}
\newtheorem{theorem}{Theorem}[section]
\newtheorem{lemma}[theorem]{Lemma}
\newtheorem{corollary}[theorem]{Corollary}
\newtheorem{proposition}[theorem]{Proposition}
\newtheorem{definition}[theorem]{Definition}
\newtheorem{remark}[theorem]{Remark}
\newtheorem{example}[theorem]{Example}
\newcommand*{\bbR}{{\mathbb{R}}}
\newcommand*{\cA}{{\mathcal{A}}}
\newcommand*{\cM}{{\mathcal{M}}}
\newcommand*{\cJ}{{\mathcal{J}}}
\newcommand*{\cK}{{\mathcal{K}}}
\newcommand*{\cI}{{\mathcal{I}}}
\newcommand*{\cU}{{\mathcal{U}}}
\newcommand*{\cV}{{\mathcal{V}}}
\newcommand*{\defeq}{\coloneqq}
\newcommand{\vp}{{\vec p}}
\let\poly\relax
\DeclareMathOperator{\poly}{poly}
\renewcommand{\vec}[1]{\bm{#1}}
\newcommand*{\mat}[1]{\mathbf{#1}}
\newcommand*{\range}[1]{[\kern-1mm[#1]\kern-.9mm]}
\tikzset{
  fitting node/.style={
      inner sep=0pt,
      fill=none,
      draw=none,
      reset transform,
      fit={(\pgf@pathminx,\pgf@pathminy) (\pgf@pathmaxx,\pgf@pathmaxy)}
    },
  reset transform/.code={\pgftransformreset}
}
\tikzset{cross/.style={path picture={
          \draw[black]
          (path picture bounding box.south east) -- (path picture bounding box.north west) (path picture bounding box.south west) -- (path picture bounding box.north east);
        }}}
\tikzstyle{ox}=[semithick,draw=black,circle,cross,inner sep=1.2mm]
\newcommand{\declarecolor}[2]{\definecolor{#1}{RGB}{#2}\expandafter\newcommand\csname #1\endcsname[1]{\textcolor{#1}{##1}}}
\NewDocumentCommand{\numberthis}{om}{%
  \IfNoValueTF{#1}{%
    \refstepcounter{equation}\tag{\theequation}%
  }{%
    \tag{#1}%
  }%
  \label{#2}%
}
\newcommand{\timehat}[1]{^{(#1)}}
\newtoks\mymathaccents
\def\[#1\]{%
\begin{flalign*}#1\end{flalign*}%
}
\newcommand{\vx}{\vec{x}}
\newcommand{\vy}{\vec{y}}
\newcommand{\vone}{\vec{1}}
\newcommand{\vzero}{\vec{0}}
\newcommand{\ie}{\emph{i.e.},~}
\tikzset{
  fitting node/.style={
      inner sep=0pt,
      fill=none,
      draw=none,
      reset transform,
      fit={(\pgf@pathminx,\pgf@pathminy) (\pgf@pathmaxx,\pgf@pathmaxy)}
    },
  reset transform/.code={\pgftransformreset}
}
\tikzset{cross/.style={path picture={
          \draw[black]
          (path picture bounding box.south east) -- (path picture bounding box.north west) (path picture bounding box.south west) -- (path picture bounding box.north east);
        }}}
\tikzstyle{ox}=[semithick,draw=black,circle,cross,inner sep=1.2mm]
\crefname{property}{Property}{Properties}
\tikzset{cross/.style={path picture={
          \draw[black]
          (path picture bounding box.south east) -- (path picture bounding box.north west) (path picture bounding box.south west) -- (path picture bounding box.north east);
        }}}
\tikzstyle{chanode}   = [fill=white,draw=black,circle,cross,inner sep=.8mm]
\tikzstyle{pl1node}   = [fill=black,draw=black,circle,inner sep=.55mm]
\tikzstyle{pl2node}   = [fill=white,draw=black,circle,inner sep=.55mm]
\tikzstyle{termina}   = [fill=white,draw=black,inner sep=.6mm]
\tikzstyle{decpt}     = [fill=black,draw=black,inner sep=.8mm]
\tikzstyle{obspt}     = [fill=white,draw=black,cross,inner sep=0.95mm]
\tikzstyle{highlight} = [line width=1.99]
\tikzstyle{infoset} = [black!50!white]
\newcommand{\cQ}{\ensuremath{\mathcal{Q}}}
\newcommand{\cP}{\ensuremath{\mathcal{P}}}
\newcommand{\cC}{\ensuremath{\mathcal{C}}}
\newcommand{\cL}{\ensuremath{\mathcal{L}}}
\newcommand{\cH}{\ensuremath{\mathcal{H}}}
\newcommand{\mA}{{\mat A}}
\newcommand{\mB}{{\mat B}}
\newcommand{\mP}{{\mat P}}
\newcommand{\lintrans}[2]{\cL_{#1 \to #2}}
\newcommand{\mtrans}[2]{\cM_{#1 \to #2}}
\newcommand{\mats}[2]{{\tilde\cM}_{\SeqfSub{#1} \to \cP(#2)}}
\newcommand{\vb}{\vec{b}}
\newcommand{\cX}{\mathcal{X}}
\newcommand{\cD}{{\mathcal{D}}}
\NewDocumentCommand{\equal}{m}{%
  \overset{\mathclap{#1}}{=}%
}
\NewDocumentCommand{\equalref}{m}{%
  \equal{\eqref{#1}}%
}
\let\olabel\label
\NewDocumentCommand \constraint {o} {%
  {\refstepcounter{equation}\mathrm{(\theequation)}\IfValueT{#1}{\olabel{#1}}}
}
\DeclareMathOperator*{\argmin}{arg\,min}
\DeclareMathOperator{\aff}{aff}
\newcommand*\circled[1]{\refstepcounter{equation}\mathrm{(\theequation)}}
\NewDocumentCommand  \Pure           {o} {{\Pi\IfNoValueF{#1}{_{#1}}}}
\NewDocumentCommand  \PureSub       {om} {\Pi_{\IfNoValueF{#1}{#1,\,}\succeq\,#2}}
\NewDocumentCommand  \Seqf           {o} {{\cQ\IfNoValueF{#1}{_{#1}}}}
\NewDocumentCommand  \SeqfSub       {om} {\cQ_{\IfNoValueF{#1}{#1,\,}\succeq\,#2}}
\NewDocumentCommand  \Seqs          {so} {{\Sigma\IfBooleanT{#1}{^*}\IfNoValueF{#2}{_{#2}}}}
\NewDocumentCommand  \SeqsSub       {om} {\Sigma_{\IfNoValueF{#1}{#1,\,}\succeq\,#2}}
\NewDocumentCommand  \Infos          {o} {{\cI\IfNoValueF{#1}{_{#1}}}}
\NewDocumentCommand  \DecNodes       {o} {{\cJ\IfNoValueF{#1}{_{#1}}}}
\NewDocumentCommand  \ObsNodes       {o} {{\cK\IfNoValueF{#1}{_{#1}}}}
\NewDocumentCommand  \Actions        {m} {\cA_{#1}}
\NewDocumentCommand  \Children       {m} {\cC_{#1}}
\NewDocumentCommand  \Hist           {o} {\cH\IfNoValueF{#1}{_{#1}}}
\NewDocumentCommand  \parent         {m} {p_{#1}}
\NewDocumentCommand  \decpt          {m} {\textsf{\MakeUppercase{#1}}}
\NewDocumentCommand  \seq            {m} {\textsf{#1}}
\NewDocumentCommand  \emptyseq        {} {{\varnothing}}
\NewDocumentCommand  \oureq          {} {LCE\xspace}
\NewDocumentCommand  \oureqfull      {} {linear-deviation correlated equilibrium\xspace}
\newcommand{\vell}{\vec{\ell}}
\def\longtitle{Polynomial-Time Linear-Swap Regret Minimization in Imperfect-Information Sequential Games}
\title{\LARGE\longtitle}
\author{%
    Gabriele Farina\\
    MIT\\
    \texttt{gfarina@mit.edu}\\
    \And
    Charilaos Pipis\\
    MIT\\
    \texttt{chpipis@mit.edu}\\
}
\begin{document}
\maketitle

\begin{abstract}\noindent%
    No-regret learners
    seek to minimize the difference between the loss they cumulated through the actions they played, and the loss they would
    have cumulated in hindsight had they consistently modified their behavior
    according to some strategy transformation function.
    The size of the set of
    transformations considered by the learner determines a natural
    notion of rationality. As the set of transformations each learner considers grows, the strategies played by the learners recover more complex game-theoretic equilibria, including correlated equilibria
    in normal-form games and extensive-form correlated equilibria in extensive-form
    games. At the extreme, a \emph{no-swap-regret} agent is one that
    minimizes regret against the set of \emph{all} functions from the set of strategies
    to itself. While it is known that the no-swap-regret condition
    can be attained efficiently in nonsequential (normal-form) games,
    understanding what is the strongest notion of rationality that can be attained \emph{efficiently} in the worst case in \emph{sequential} (extensive-form) games
    is a longstanding open problem.
    In this paper we provide a positive result, by showing that it is possible, in any sequential game, to retain polynomial-time (in the game tree size) iterations while achieving sublinear regret with respect to
    \emph{all linear transformations} of the
    mixed strategy space, a notion called \emph{no-linear-swap regret}.
    This notion of hindsight rationality is as strong as no-swap-regret in nonsequential games, and stronger than no-trigger-regret in sequential games---thereby proving the existence of a subset of extensive-form correlated equilibria robust to linear deviations, which we call \emph{linear-deviation correlated equilibria}, 
    that can be approached efficiently.

\end{abstract}

\section{Introduction}
The framework of regret minimization provides algorithms that players can use to gradually improve their strategies in a repeated game, enabling learning strong strategies even when facing unknown and potentially adversarial opponents. One of the appealing properties of no-regret learning algorithms is that they are \emph{uncoupled}, meaning that each player refines their strategy based on their own payoff function, and on other players’ strategies, but not on the payoff functions of other players. Nonetheless, despite their uncoupled nature and focus on \emph{local} optimization of each player's utility, it is one of the most celebrated results in the theory of learning in games that in many cases, when all players are learning using these algorithms, the empirical play recovers appropriate notions of \emph{equilibrium}---a \emph{global} notion of game-theoretic optimality. Strategies constructed via no-regret learning algorithms (or approximations thereof) have been key components in constructing human-level and even superhuman AI agents in a variety of adversarial games, including Poker \citep{Moravcik2017:Deepstack, Brown2018:Superhuman, Brown19:Superhuman}, Stratego \citep{Perolat2022:Mastering}, and Diplomacy \citep{Bakhtin23:Mastering}.

In regret minimization, each learning agent
seeks to minimize the difference between the loss (opposite of reward) they
accumulated through the actions they played, and the loss they would
have accumulated in hindsight had they consistently modified their behavior
according to some strategy transformation function.
The size of the set of
transformation functions considered by the learning agent determines a natural
notion of rationality of the agent.
Already when the agents seeks to learn
strategies that cumulate low regret against \emph{constant} strategy
transformations only---a notion of regret called \emph{external} regret---the average play of the agents converges to a Nash equilibrium in two-player constant-sum games, and to a coarse correlated equilibrium in general-sum multiplayer games. As the sets of transformations the each agent considers
grows, more complex equilibria can be achieved, including correlated equilibria
in normal-form games (\citet{Foster1997:Calibrated,fudenberg1995consistency,fudenberg1999conditional,hart2000simple,hart2001general}; see also the monograph by~\citet{fudenberg1998theory}) and extensive-form correlated equilibria in extensive-form
games \citep{Farina22:Simple}. At the extreme, a maximally hindsight-rational agent is one that
minimizes regret against the set of \emph{all} functions from the strategy
space to itself (aka. \emph{swap} regret). While it is known that maximum hindsight rationality
can be attained efficiently in nonsequential (normal-form) games \citep{Stoltz2007:Learning,Blum2007:From}, it is a major open problem to determine whether the same applies to sequential (\ie extensive-form) games, and more generally what is the strongest notion of rationality that can be attained efficiently in the
worst case in the latter setting.

In this paper, we provide a positive result in that direction, by showing that hindsight rationality can be achieved efficiently in general imperfect-information extensive-form games when one restricts to the set of \emph{all linear transformations} of the mixed strategy space---a notion called \emph{linear-swap regret}, and that coincides with swap regret in normal-form games.
In order to establish the result, we introduce several intermediate results related to the geometry of sequence-form strategies in extensive-form games. In particular, a crucial result is given in \Cref{thm:charac}, which shows that the set of linear functions $\mtrans\cQ\cP$ from the sequence-form strategy set $\cQ$ of a player in an extensive-form game to a generic convex polytope $\cP$ can be captured using only polynomially many linear constraints in the size of the game tree and the number of linear constraints that define $\cP$. Applying the result to the special case $\cP = \cQ$, we are then able to conclude that the the polytope of linear transformations $\mtrans\cQ\cQ$ from the sequence-form strategy set to itself can be captured by polynomially many linear constraints in the size of the game tree, and the norm of any element is polynomially bounded.
The polynomial characterization and bound for $\mtrans\cQ\cQ$ is used in conjunction with an idea of \citet{Gordon08:No} to construct a no-linear-swap-regret minimizer for the set of strategies $\cQ$ starting from two primitives: i) a no-external-regret algorithm for the set of transformations $\mtrans\cQ\cQ$, and ii) an algorithm to compute a fixed point strategy for any transformation in $\mtrans\cQ\cQ$. In both cases, the polynomial representation of $\mtrans\cQ\cQ$ established through \Cref{thm:charac} plays a fundamental role. It allows, on the one hand, to satisfy requirement ii) using linear programming. On the other hand, it enables us to construct a no-external-regret algorithm that outputs transformations in $\mtrans\cQ\cQ$ with polynomial-time iterations, by leveraging the known properties of online projected gradient descent, exploiting the tractability of projecting onto polynomially-representable polytopes.

Finally, in the last section of the paper we turn our attention away from hindsight rationality to focus instead on the properties of the equilibria that our no-linear-swap-regret dynamics recover in extensive-form games.
The average play of no-linear-swap-regret players converges to a set of equilibria that we coin \emph{linear-deviation correlated equilibria (\oureq{}s)}. \oureq{}s form a superset of correlated equilibria and a subset of extensive-form correlated equilibria in extensive-form games. In \Cref{sec:lce} we show that these inclusions are in general strict, and provide additional results about the complexity of computing a welfare-maximizing \oureq{}.

\paragraph{Related work}

As mentioned in the introduction, the existence of uncoupled no-regret  dynamics leading to correlated equilibrium (CE) in multiplayer normal-form games is a celebrated result dating back to at least the work by \citet{Foster1997:Calibrated}. That work inspired researchers to seek uncoupled learning procedures in other settings as well.
For example,~\citet{Stoltz2007:Learning} studies learning dynamics leading to CE in games with an infinite (but compact) action set, while \citet{kakade2003correlated} focuses on graphical games. In more recent years, a growing effort has been spent towards understanding the relationships between no-regret learning and equilibria in imperfect-information extensive-form games, the settings on which we focus. Extensive-form games pose additional challenges when compared to normal-form games, due to their sequential nature and presence of imperfect information. While efficient no-external-regret learning dynamics for extensive-form games are known (including the popular CFR algorithm~\citep{zinkevich2008regret}), as of today not much is known about no-swap-regret and the complexity of learning CE in extensive-form games.

In recent work, \cite{Anagnostides23:NearOptimal} construct trigger-regret dynamics that converge to EFCE at a rate of $O(\frac{\log T}{T})$, whereas our regret dynamics converge to linear-deviation correlated equilibria at a slower rate of $O(\frac{1}{\sqrt{T}})$. While the aforementioned paper proposes a general methodology that applies to CEs in normal-form games and EFCE/EFCCE in sequential games, the authors’ construction fundamentally relies on being able to express the fixed points computed by the algorithm as (linear combinations of) rational functions with positive coefficients of the deviation matrices. For EFCE and EFCCE this fundamentally follows from the fact that the fixed points can be computed inductively, solving for stationary distributions of local Markov chains at each decision point. The no-linear-swap regret algorithm proposed in our paper does not offer such a local characterization of the fixed point and thus, we cannot immediately transfer the improved regret bounds to our case.

The closest notion to CE that is known to be efficiently computable in extensive-form games is {\em extensive-form correlated equilibrium} (EFCE) \citep{vonStengel2008,huang2008computing}. The question of whether the set of EFCE could be approached via uncoupled no-regret dynamics with polynomial-time iterations in the size of the extensive-form games was recently settled in the positive \citep{Farina22:Simple,Celli20:NoRegret}. In particular, \citet{Farina22:Simple} show that EFCE arises from the average play of no-trigger-regret algorithms, where trigger deviations are a particular subset of linear transformations of the sequence-form strategy polytope $\cQ$ of each player. Since this paper focuses on learning dynamics that guarantee sublinear regret with respect to \emph{any} linear transformation of $\cQ$, it follows immediately that the dynamics presented in this paper recover EFCE as a special case.

The concept of linear-swap-regret minimization has been considered before in the context of \emph{Bayesian} games.
\cite{Mansour2022:Strategizing} study a setting where a no-regret \textit{learner} competes in a two-player Bayesian game with a rational utility \textit{maximizer}, that is a strictly more powerful opponent than a learner. Under this setting, it can be shown that in every round the optimizer is guaranteed to obtain at least the Bayesian Stackelberg value of the game. Then they proceed to prove that minimizing \textit{linear-swap regret} is necessary if we want to cap the optimizer's performance at the Stackelberg value, while minimizing polytope-swap regret (a generalization of swap regret for Bayesian games, and strictly stronger than linear-swap) is sufficient to cap the optimizer's performance. Hence, these results highlight the importance of developing learning algorithms under stronger notions of \textit{rationality}, as is our aim in this paper. Furthermore, these results provide evidence that constructing a no-linear-swap regret learner, as is our goal here, can present benefits when compared to other less rational learners. %
In a concurrent paper, \cite{Fujii2023:Bayes} defines the notion of \textit{untruthful swap regret} for Bayesian games and proves that, for Bayesian games, it is equivalent to the linear-swap regret which is of interest in our paper.

Bayesian games can be considered as a special case of extensive-form games, where a chance node initially selects one of the possible types $\Theta$ for each player. Thus, our algorithm minimizing linear-swap regret in extensive-form games also minimizes linear-swap regret in Bayesian games. However, we remark that our regret bound depends polynomially on the number of player types $|\Theta|$ as they are part of the game tree representation, while \cite{Fujii2023:Bayes} has devised an algorithm for Bayesian games, whose regret only depends on $\log |\Theta|$. %

Finally, we also mention that nonlinear deviations have been explored in extensive-form games, though we are not aware of notable large sets for which polynomial-time no-regret dynamics can be devised. Specifically, we point to the work by
\citet{Morrill2021:Efficient}, which defines the notion of ``behavioral deviations''. These deviations are nonlinear with respect to the sequence-form representation of strategies in extensive-form games.
The authors categorize several known or novel types of restricted behavioral deviations into a Deviation Landscape that highlights the relations between them. Even though both the linear deviations, we consider in this paper, and the behavioral deviations seem to constitute rich measures of rationality, none of them contains the other and thus, linear deviations do not fit into the Deviation Landscape of \cite{Morrill2021:Efficient} (see also \cref{rmrk:behavioral_linear}). %

\section{Preliminaries}

We recall the standard model of extensive-form games, as well as the framework of learning in games.

\subsection{Extensive-Form Games}

While normal-form games (NFGs) correspond to nonsequential interactions, such as Rock-Paper-Scissors, where players simultaneously pick one action and then receive a payoff based on what others picked,
extensive-form games (EFGs) model games that are played on a game tree. They capture both sequential and simultaneous moves, as well as private information and are therefore a very general and expressive model of games, capturing chess, go, poker, sequential auctions, and many other settings as well. We now recall basic properties and notation for EFGs.

\textbf{Game tree}\quad
In an $n$-player extensive-form game, each node in the game tree is associated with exactly one player from the set
$\{1, \dots, n\} \cup \{c\}$, where the special player $c$---called the \emph{chance}
player---is used to model random stochastic outcomes, such as rolling a die or drawing cards from a deck. Edges leaving from a node represent actions that a player can take at that node. To model private
information, the game tree is supplemented with an information partition, defined as a partition of
nodes into sets called information sets. Each node belongs to exactly one information set, and each
information set is a nonempty set of tree nodes for the same Player $i$. An information set for Player $i$
denotes a collection of nodes that Player $i$ cannot distinguish among, given what she has observed so
far. (We remark that all nodes in a same information set must have the same set of available actions, or the player would distinguish the nodes). The set of all information sets of Player $i$ is denoted $\DecNodes[i]$.
In this paper, we will only consider \textit{perfect-recall} games, that is, games in which the information sets are arranged in accordance with the fact that no player forgets what the player knew earlier,.

\textbf{Sequence-form strategies}\quad
Since nodes belonging to the same information set for a player are indistinguishable to that player, the player must play the same strategy at each of the nodes. Hence, a strategy for a player is exactly a mapping from an \emph{information set} to a distribution over actions. In other words, it is the information sets, and not the game tree nodes, that capture the decision points of the player. We can then represent a strategy for a generic player $i$ as a vector indexed by each valid information set-action pair $(j, a)$. Any such valid pair is called a \emph{sequence} of the player; the set of all sequences is denoted as $\Sigma_i \defeq \{(j, a) : j \in \DecNodes[i], a \in \Actions j\} \cup \{\emptyseq\}$,
where the special element $\emptyseq$ is called \emph{empty sequence}. Given an information set $j \in \DecNodes[i]$, we denote by $\parent j$ the parent sequence of $j$, defined as the last pair $(j, a)\in\Seqs[i]$ encountered on the path from the root to any node $v \in j$; if no such pair exists we let $\parent j = \emptyseq$. Finally, we denote by $\Children{\sigma}$ the children of sequence $\sigma \in \Seqs[i]$, defined as the information sets $j \in \DecNodes[i]$ for which $\parent{j} = \sigma$. Sequences $\sigma$ for which $\Children{\sigma}$ is an empty set are called \emph{terminal}; the set of all terminal sequences is denoted $\Sigma^\bot_i$. 

\begin{example}
    Consider the tree-form decision process faced by Player~1 in the small game of \Cref{fig:example sf strategy} (Left).
    The decision process has four decision nodes $\DecNodes[1] = \{\decpt A, \decpt B, \decpt C, \decpt D\}$ and nine sequences including the empty sequence $\emptyseq$. For decision node $\seq D$, the parent sequence is $\parent{\decpt D} = \seq{A2}$; for $\decpt B$ and $\decpt C$ it is $p_{\decpt B} = \seq{A1}$; for $\seq A$ it is the empty sequence $p_{\seq A}= \emptyseq$.

\NewDocumentCommand \SingletonInfoset {O{4mm}m} {
    \Infoset[#1]{(#2.center) -- +(0.0001,0)}
}

\NewDocumentCommand \Infoset {O{4mm}m} {
    \begin{pgfonlayer}{background}
        \draw[black!40, line width/.evaluated={#1}, line cap=round] #2;
        \draw[white, line width/.evaluated={.9*#1}, line cap=round] #2;
    \end{pgfonlayer}
    \begin{pgfonlayer}{middle}
        \draw[black!10, draw opacity=1.00, line width/.evaluated={.9*#1}, line cap=round] #2;
    \end{pgfonlayer}
}
    \begin{figure}[H]\centering
        \begin{minipage}{5.5cm}\centering
    \begin{tikzpicture}[baseline=0pt]
    \tikzstyle{chance} = [fill=white, draw=black, circle, semithick,cross,inner sep=.8mm]
\tikzstyle{pl1}    = [fill=black, draw=black, circle, inner sep=.5mm]
\tikzstyle{pl2}    = [fill=white, draw=black, circle, inner sep=.55mm]
\tikzstyle{termina} = [fill=white, draw=black,  inner sep=.7mm]
\tikzstyle  {decpt} = [fill=black, draw=black,         inner sep=1.0mm,rounded corners=.3mm]
\tikzstyle {action} = [
    semithick, black, -{Stealth[scale width=1.10,inset=.3mm,length=1.55mm]},
    preaction={
        draw,line width=1.0mm,white,-{Stealth[scale width=1.10,inset=.3mm,length=1.55mm]}
    }
]

        \def\done{.9*1.6}
        \def\dtwo{.45*1.6}
        \def\dleaf{.25*1.6}
        \def\dvert{-.8*1.2}
        \contourlength{.6mm}
        \def\Xseq#1{\scalebox{.8}{\contour{white}{\seq{#1}}}}

        \node[pl1] (A) at (0, 0) {};
        \node[pl2] (X) at ($(-\done,\dvert)$) {};
        \node[pl2] (Y) at ($(\done,\dvert)$) {};
        \node[pl1] (B) at ($(X) + (-\dtwo, \dvert)$) {};
        \node[pl1] (C) at ($(X) + (\dtwo, \dvert)$) {};
        \node[termina] (l1) at ($(B) + (-\dleaf, \dvert)$) {};
        \node[termina] (l2) at ($(B) + (\dleaf, \dvert)$) {};
        \node[termina] (l3) at ($(C) + (-\dleaf, \dvert)$) {};
        \node[termina] (l4) at ($(C) + (\dleaf, \dvert)$) {};

        \node[pl1] (D1) at ($(Y) + (-\dtwo, \dvert)$) {};
        \node[pl1] (D2) at ($(Y) + (\dtwo, \dvert)$) {};
        \node[termina] (l5) at ($(D1) + (-\dleaf, \dvert)$) {};
        \node[termina] (l6) at ($(D1) + (0, \dvert)$) {};
        \node[termina] (l7) at ($(D1) + (\dleaf, \dvert)$) {};
        \node[termina] (l8) at ($(D2) + (-\dleaf, \dvert)$) {};
        \node[termina] (l9) at ($(D2) + (0, \dvert)$) {};
        \node[termina] (l10) at ($(D2) + (\dleaf, \dvert)$) {};

        \draw[action] (A)  -- node{\Xseq{1}} (X);
        \draw[action] (A)  -- node{\Xseq{2}} (Y);
        \draw[action] (B)  -- node{\Xseq{3}} (l1);
        \draw[action] (B)  -- node{\Xseq{4}} (l2);
        \draw[action] (C)  -- node{\Xseq{5}} (l3);
        \draw[action] (C)  -- node{\Xseq{6}} (l4);
        \draw[action] (D1) -- node{\Xseq{7}} (l5);
        \draw[action] (D1) -- node[fill=white,inner sep=0mm]{\Xseq{8}} (l6);
        \draw[action] (D1) -- node{\Xseq{9}} (l7);
        \draw[action] (D2) -- node{\Xseq{7}} (l8);
        \draw[action] (D2) -- node[fill=white,inner sep=0mm]{\Xseq{8}} (l9);
        \draw[action] (D2) -- node{\Xseq{9}} (l10);
        \draw[action] (X) -- (B);
        \draw[action] (X) -- (C);
        \draw[action] (Y) -- (D1);
        \draw[action] (Y) -- (D2);

        \begin{pgfonlayer}{background}
            \SingletonInfoset{X}
            \node[infoset, left=0.3mm of X] {\scalebox{.8}{\decpt{P}}};
            \SingletonInfoset{Y}
            \node[infoset, right=0.3mm of Y] {\scalebox{.8}{\decpt{Q}}};

            \SingletonInfoset{A}
            \node[infoset, left=0.3mm of A] {\scalebox{.8}{\decpt{a}}};

            \SingletonInfoset{B}
            \node[infoset, left=0.3mm of B] {\scalebox{.8}{\decpt{b}}};

            \SingletonInfoset{C}
            \node[infoset, left=0.3mm of C] {\scalebox{.8}{\decpt{c}}};

            \Infoset{(D1.center) -- (D2.center)}
            \node[infoset]  at ($(D2) + (.4, 0)$) {\scalebox{.8}{\decpt{d}}};
        \end{pgfonlayer}

            \end{tikzpicture}%

        \end{minipage}\hspace{1.2cm}
        \begin{minipage}{7.0cm}
            {Sequence-form constraints:}\\[2mm]
            $~\quad\displaystyle\left\{ \arraycolsep=2pt\begin{array}{rcl}
                    \vx[\emptyseq]                                & = & 1,              \\
                    \vx[\seq{A1}] + \vx[\seq{A2}]                 & = & \vx[\emptyseq], \\
                    \vx[\seq{B3}] + \vx[\seq{B4}]                 & = & \vx[\seq{A1}],  \\
                    \vx[\seq{C5}] + \vx[\seq{C6}]                 & = & \vx[\seq{A1}],  \\
                    \vx[\seq{D7}] + \vx[\seq{D8}] + \vx[\seq{D9}] & = & \vx[\seq{A2}].
                \end{array}\right.$
        \end{minipage}
        \caption{(Left) Tree-form decision process considered in the example.
            Black round nodes belong to Player~1; white round nodes to Player~2. Square white nodes are terminal nodes in the game tree, payoffs are omitted. Gray bags denote information sets. (Right) The constraints that define the sequence-form polytope $\Seqf[1]$ for Player~1 (besides nonnegativity).
        }
        \label{fig:example sf strategy}
    \end{figure}
\end{example}

A \emph{reduced-normal-form plan} for Player~$i$ represents a deterministic strategy for the player as a vector $\vx \in \{0,1\}^{\Seqs[i]}$ where the entry corresponding to the generic sequence $\vx[ja]$ is equal to $1$ if the player plays action $a$ at (the nodes of) information set $j \in \DecNodes[i]$. Information sets that cannot be reached based on the strategy do not have any action select. 
A crucial property of the reduced-normal-form plan representation of deterministic strategies is the fact that the utility of any player is a multilinear function in the profile of reduced-normal-form plans played by the players. The set of all reduced-normal-form plans of Player $i$ is denoted with the symbol $\Pi_i$. Typically, the cardinality of $\Pi_i$ is exponential in the size of the game tree.

The convex hull of the set of reduced-normal-form plans of Player~$i$ is called the \emph{sequence-form polytope} of the player, and denoted with the symbol $\cQ_i \defeq \textrm{conv}(\Pi_i)$. It represents the set of all randomized strategies in the game. An important result by \citet{Romanovskii62:Reduction,Koller96:Efficient,Stengel96:Efficient} shows that $\cQ_i$ can be captured by polynomially many constraints in the size of the game tree, as we recall next.
\begin{definition}\label{def:sf polytope}
    The \emph{polytope of sequence-form strategies} of Player $i$ is equal to the
    convex polytope
    \[
        \Seqf[i] \defeq \mleft\{ \vx\in\bbR_{\geq 0}^{\Seqs}:\begin{array}{lll}
            \constraint[eq:root mass] & \vx[\emptyseq] =
            1                                                                                \\[1mm]
            \constraint[eq:mass conservation] & \sum_{a\in \Actions{j}} \vx[ja] = \vx[\parent{j}] & \forall\, j\in
            \DecNodes
        \end{array}
        \mright\}.
    \]
\end{definition}
As an example, the constraints that define the sequence-form polytope for Player 1 in the game of \Cref{fig:example sf strategy} (Left) are shown in \Cref{fig:example sf strategy} (Right).
The polytope of sequence-form strategies possesses a
strong combinatorial structure that enables speeding up several common
optimization procedures and will be crucial in developing efficient algorithms to converge to equilibrium.

\subsection{Hindsight Rationality and Learning in Games}

Games are one of many situations in which a decision-maker has to act in an online manner. For these situations, the most widely used protocol is that of Online Learning (e.g., see \cite{orabona2022:modern}). Specifically, each learner has a set of actions or behavior they can employ $\mathcal{X} \subseteq\bbR^d $ (in extensive-form games, this would typically be the set of reduced-normal-form strategies). At each timestep $t$ the learner first selects, possibly at random, an element $\vx \in \mathcal{X}$, and then receives a loss (opposite of utility) function $\ell\^t : \mathcal{X} \mapsto \bbR$. Since as we observed the above utilities in EFGs are linear in each player's reduced-normal-form plans, for the rest of the paper we focus on the case in which the loss function $\ell\^t$ is linear, that is, of the form $\ell\^t : \vx \mapsto \langle\vell\^t, \vx\^t\rangle$.

A widely adopted objective for the learner is that of ensuring vanishing average \emph{regret} with high probability. Regret is defined as the difference between the loss the learner cumulated through their choice of behavior, and the loss they would have cumulated in hindsight had they consistently modified their behavior according to some strategy transformation function. In particular, let $\Phi$ be a desired set of strategy transformations $\phi : \mathcal{X} \to \mathcal{X}$ that the learner might want to learn not to regret. Then, the learner's $\Phi$-regret is defined as the quantity
\[
    \Phi\text{-Reg}\^{T} \defeq \max_{\phi \in \Phi}\sum_{t=1}^T \left( \langle\vell\^t, \vx\^t\rangle - \langle\vell\^t, \phi(\vx\^t)\rangle \right)
\]

A \textit{no-$\Phi$-regret algorithm} (also known as a $\Phi$-regret minimizer) is one that, at all times $T$, guarantees with high probability that $\Phi\text{-Reg}\^{T} = o(T)$ no matter what is the sequence of losses revealed by the environment. The size of the set $\Phi$ of strategy transformations defines a natural measure of rationality (sometimes called \emph{hindsight rationality}) for players, and several choices have been discussed in the literature. Clearly, as $\Phi$ gets larger, the learner becomes more rational. On the flip side, guaranteeing sublinear regret with respect to all transformations in the chosen set $\Phi$ might be intractable in general.
On one end of the spectrum, perhaps the smallest meaningful choice of $\Phi$ is the set of all \textit{constant} transformations $\Phi^{\text{const}} = \{ \phi_{\hat{\vx}}: \vx \mapsto \hat{\vx} \}_{\hat{\vx} \in \mathcal{X}}$. In this case, $\Phi^{\text{const}}$-regret is also called \textit{external regret} and has been extensively studied in the field of online convex optimization. On the other end of the spectrum, \emph{swap regret} corresponds to the setting in which $\Phi$ is the set of \emph{all} transformations $\cX \to \cX$. Intermediate, and of central importance in this paper, is the notion of \emph{linear-swap regret}, which corresponds to the case in which 
\[
    \Phi \defeq \{\vx \mapsto \mA\vx : \mA \in \bbR^{d\times d}, \text{ with } \mA \vx \in \cX \quad \forall\, \vx \in \cX\}
    \tag{linear-swap deviations}
\]
is the set of all linear transformations from $\cX$ to itself.\footnote{For the purposes of this paper, the adjective \emph{linear} refers to the fact that each transformation can be expressed in the form $\vx \mapsto \mA \vx$ for an appropriate matrix $\mA$.}

An important observation is that when all considered deviation functions in $\Phi$ are linear, an algorithm guaranteeing sublinear $\Phi$-regret for the set $\cX$ can be constructed immediately from a deterministic no-$\Phi$-regret algorithm for $\cX' = \Delta(\cX)$ by sampling $\cX \ni \vx$ from any $\vx' \in \cX'$ so as to guarantee that $\mathbb{E}[\vx] = \vx'$. Since this is exactly the setting we study in this paper, this folklore observation (see also \citet{Farina22:Simple}) enables us to focus on the following problem: does a deterministic no-$\Phi$-regret algorithm for the set of sequence-form strategies $\cX = \cQ_i$ of any player in an extensive-form game, with guaranteed sublinear $\Phi$-regret in the worst case, exist? In this paper we answer the question for the positive.

\paragraph{From regret to equilibrium}
The setting of Learning in Games refers to the situation in which all players employ 
a learning algorithm, receiving as loss the negative of the gradient of their own utility evaluated in the strategies output by all the other players. A fascinating aspect of no-$\Phi$-regret learning dynamics is that if each player of a game employs a no-$\Phi$-regret algorithm, then the empirical frequency of play converges almost surely to the set of $\Phi$-equilibria, which are notions of correlated equilibria, in which the rationality of players is bounded by the size of the set $\Phi$. Formally, for a set $\Phi$ of strategy deviations, a $\Phi$-equilibrium is defined as follows.

\begin{definition}
For a $n$-player extensive-form game $G$ and a set $\Phi_i$ of deviations for each player, a $\{\Phi_i\}$-equilibrium is a joint distribution $\mu \in \Delta(\Pure_1 \times \cdots \times \Pure_n)$ such that for each player~$i$, and every deviation $\phi \in \Phi_i$ it holds that
\[
    \mathbb{E}_{\vec{x} \sim \mu}[u_i(\vec{x})] \geq \mathbb{E}_{\vec{x} \sim \mu}[u_i(\phi(\vec{x}_i), \vec{x}_{-i})]
\]

That is, no player~$i$ has an incentive to unilaterally deviate from the recommended joint strategy $\vec{x}$ using any transformation $\phi \in \Phi_i$.
\end{definition}

This general framework captures several important notions of equilibrium across a variety of game theoretic models. For example, in both NFGs and EFGs, no-external regret dynamics converge to the set of Coarse Correlated Equilibria. In NFGs, no-swap regret dynamics converge to the set of Correlated Equilibria \citep{Blum2007:From}. In EFGs, \cite{Farina22:Simple} recently proved that a specific subset $\Phi$ of linear transformations called \emph{trigger deviations} lead to the set of EFCE.

\paragraph{Reducing $\Phi$-regret to external regret}
An elegant construction by \citet{Gordon08:No} enables constructing no-$\Phi$-regret algorithms for a generic set $\cX$ starting from a no-external-regret algorithm for $\Phi$. We briefly recall the result.

\begin{theorem}[\cite{Gordon08:No}]\label{thm:phi_regret_framework}
    Let $\mathcal{R}$ be an external regret minimizer having the set of transformations $\Phi$ as its action space, and achieving sublinear external regret $\text{Reg}\^T$. Additionally, assume that for all $\phi \in \Phi$ there exists a fixed point $\phi(\vx) = \vx \in \mathcal{X}$. Then, a $\Phi$-regret minimizer $\mathcal{R}_{\Phi}$ can be constructed as follows:
    \begin{itemize}[nosep,left=5mm]
        \item To output a strategy $\vx\^t$ at iteration $t$ of $\mathcal{R}_{\Phi}$, obtain an output $\phi\^t \in \Phi$ of the external regret minimizer $\mathcal{R}$, and return one of its fixed points $\vx\^t = \phi\^t(\vx\^t)$.

        \item For every linear loss function $\ell\^t$ received by $\mathcal{R}_{\Phi}$, construct the linear function $L\^t : \phi \mapsto \ell\^t(\phi(\vx\^t))$ and pass it as loss to $\mathcal{R}$.
    \end{itemize}

    Let $\Phi\text{-}\mathrm{Reg}\^T$ be the $\Phi$-regret of $\mathcal{R}_{\Phi}$. Under the previous construction, it holds that

    $$
    \Phi\text{-}\mathrm{Reg}\^T = \mathrm{Reg}\^T \qquad \forall\, T = 1, 2, \ldots
    $$

    Thus, if $\mathcal{R}$ is an external regret minimizer then $\mathcal{R}_{\Phi}$ is a $\Phi$-regret minimizer.
\end{theorem}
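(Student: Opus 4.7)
The plan is to unwind both regret quantities and recognize that, on a step-by-step basis, the fixed-point property makes the two cumulative sums coincide term by term. Concretely, I would fix an arbitrary comparator $\phi^\star \in \Phi$ and examine a single round $t$ of the instantaneous $\Phi$-regret of $\mathcal{R}_\Phi$ against $\phi^\star$:
\begin{equation*}
\ell\^t\bigl(\vx\^t\bigr) - \ell\^t\bigl(\phi^\star(\vx\^t)\bigr).
\end{equation*}
The construction dictates that $\vx\^t$ is a fixed point of $\phi\^t$, i.e.\ $\phi\^t(\vx\^t) = \vx\^t$, so $\ell\^t(\vx\^t) = \ell\^t(\phi\^t(\vx\^t)) = L\^t(\phi\^t)$. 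On the other hand, by definition of $L\^t$, one also has $\ell\^t(\phi^\star(\vx\^t)) = L\^t(\phi^\star)$. Substituting, the instantaneous $\Phi$-regret at round $t$ equals $L\^t(\phi\^t) - L\^t(\phi^\star)$, which is precisely the instantaneous external regret incurred by $\mathcal{R}$ on its own loss sequence $\{L\^t\}$ against the same comparator $\phi^\star$.

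The second step is to sum over $t = 1,\dots,T$ and take the maximum over $\phi^\star \in \Phi$ on both sides. Since the maximum commutes with the identical equality we just established term by term, I obtain
\begin{equation*}
\Phi\text{-}\mathrm{Reg}\^T = \max_{\phi^\star \in \Phi}\sum_{t=1}^T \bigl(\ell\^t(\vx\^t) - \ell\^t(\phi^\star(\vx\^t))\bigr) = \max_{\phi^\star \in \Phi}\sum_{t=1}^T \bigl(L\^t(\phi\^t) - L\^t(\phi^\star)\bigr) = \mathrm{Reg}\^T,
\end{equation*}
which is exactly what the theorem claims. Sublinearity of $\mathrm{Reg}\^T$ then immediately transfers to $\Phi\text{-}\mathrm{Reg}\^T$, so $\mathcal{R}_\Phi$ inherits the no-regret property of $\mathcal{R}$.

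Before the sum step, I would also briefly verify that feeding $L\^t$ into $\mathcal{R}$ is legitimate, namely that $L\^t : \phi \mapsto \ell\^t(\phi(\vx\^t))$ is a linear function of $\phi$ (which the hypothesis on linearity of the loss and the linear action of $\phi$ on $\vx\^t$ guarantees, since for $\phi : \vx \mapsto \mA\vx$ and linear $\ell\^t$, $L\^t(\phi) = \langle \vell\^t, \mA\vx\^t\rangle$ is linear in the entries of $\mA$). The only conceptual subtlety, and the point I would flag as the real engineering obstacle when instantiating the construction in practice (though not in the proof itself), is the assumption that every $\phi \in \Phi$ admits a fixed point in $\cX$ and that such a fixed point can be computed efficiently; the proof takes that as a black box, and later sections of the paper are devoted to discharging it for $\Phi = \mtrans{\cQ}{\cQ}$ via \Cref{thm:charac} and linear programming.
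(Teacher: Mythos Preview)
Your argument is correct and is exactly the standard proof of this reduction. Note, however, that the paper does not actually prove \Cref{thm:phi_regret_framework}; it is stated with attribution to \citet{Gordon08:No} and used as a black box, so there is no ``paper's own proof'' to compare against---your write-up is precisely what one would expect a self-contained version of that proof to look like.
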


\section{A No-Linear-Swap Regret Algorithm with Polynomial-Time Iterations}
\label{sec:fp}

In this section, we describe our no-linear-swap-regret algorithm for the set of sequence-form strategies $\cQ$ of a generic player in any perfect-recall imperfect-information extensive-form game. The algorithm follows the general template for constructing $\Phi$-regret minimizers given by \cite{Gordon08:No} and recalled in \cref{thm:phi_regret_framework}.
For this we need two components:
\begin{itemize}[left=5mm,nosep,itemsep=1mm]
    \item[i)] an efficient external regret minimizer for the set $\mtrans\cQ\cQ$ of all matrices inducing linear transformations from $\cQ$ to $\cQ$,
    \item[ii)] an efficiently computable fixed point oracle for matrices $\mA \in \mtrans\cQ\cQ$, returning $\vx = \mA \vx \in \cQ$.
\end{itemize}
The existence of a fixed point, required in ii), is easy to establish by Brouwer's fixed point theorem, since the polytope of sequence-form strategies is compact and convex, and the continuous function $\vx \mapsto \mA\vx$ maps $\cQ$ to itself by definition. Furthermore, as it will become apparent later in the section, all elements $\mA \in \mtrans\cQ\cQ$ have entries in $[0,1]^{\Seqs \times \Seqs}$. Hence, requirement ii) can be satisfied directly by solving the linear feasibility program
$\{\text{find}\ \vx : \mA \vx = \vx, \vx \in \cQ\}$
using any of the known polynomial-time algorithms for linear programming. 
Establishing requirement i) is where the heart of the matter is, and it is the focus of much of the paper. Here, we give intuition for the main insights that contribute to the algorithm. All proofs are deferred to the appendix.

\subsection{The Structure of Linear Transformations of Sequence-Form Strategy Polytopes}
The crucial step in our construction is to establish a series of results shedding light on the fundamental geometry of the set $\mtrans\cQ\cQ$ of \emph{all} linear transformations from a sequence-form polytope $\cQ$ to itself. In fact, our results extend beyond functions from $\cQ$ to $\cQ$ to more general functions from $\cQ$ to a generic compact polytope $\cP \defeq \{\vx \in \bbR^d : \mP \vx = \vp, \vx \ge \vzero\}$ for arbitrary $\mP$ and $\vp$. We establish the following characterization theorem, which shows that when the functions are expressed in matrix form, the set $\mtrans\cQ\cP$ can be captured by a polynomial number of constraints. The proof is deferred to \Cref{sec:charac proof}.

\begin{restatable}{theorem}{thmcharac}\label{thm:charac}
    Let $\cQ$ be a sequence-form strategy space and let $\cP$ be any bounded polytope of the form $\cP \defeq \{\vx\in\bbR^d : \mat{P}\vx = \vec{p}, \vx \ge \vec{0}\} \subseteq [0,\gamma]^d$, where $\mat{P}\in\bbR^{k\times d}$.
    Then, for any linear function $f : \cQ \to \cP$, there exists a matrix $\mat{A}$ in the polytope
    \[
        \mtrans\cQ\cP \defeq \mleft\{\mat{A} = \big[\,\cdots \mid \mat{A}_{(\sigma)} \mid \cdots\,\big] \in\bbR^{d \times \Sigma} :
        \begin{array}{rll}
            \constraint[eq:cons seqf]                & \mat{P}\mat{A}_{(ja)} = \vb_j                 & \forall\,ja\in\Sigma^\bot                    \\[.5mm]
            \constraint[eq:column zero]                & \mat{A}_{(\sigma)} = \vec{0}                  & \forall\,\sigma\in\Sigma\setminus\Sigma^\bot \\[.5mm]
            \constraint[eq:cart nonroot]                & \sum_{j'\in\cC_{\emptyseq}}\vb_{j'} = \vec{p}                                                \\[.5mm]
            \constraint[eq:cart root]                & \sum_{j'\in\cC_{ja}}\vb_{j'} = \vb_j          & \forall\,ja\in\Sigma\setminus\Sigma^\bot     \\[2mm]
            \constraint[eq:cons range] & \mat{A}_{(\sigma)} \in [0,\gamma]^d           & \forall\,\sigma\in\Sigma                     \\[.5mm]
            \constraint                & \vb_j \in \bbR^k                              & \forall\,j\in\cJ                             \\
        \end{array}
        \mright\}
    \]
    such that $f(\vx) = \mat{A}\vx$ for all $\vx\in \cQ$.
    Conversely, any $\mat{A}\in\mtrans\cQ\cP$ defines a linear function $\vx \mapsto \mat{A}\vx$ from $\cQ$ to $\cP$, that is, such that $\mat{A}\vx \in \cP$ for all $\vx\in\cQ$.
\end{restatable}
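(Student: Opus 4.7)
My plan is to prove the two directions of the claim separately, leveraging the key structural observation that the constraints defining $\mtrans\cQ\cP$ mirror the sequence-form flow conditions of $\cQ$ itself, now imposed on the columns of $\mat{A}$ (indexed by sequences) rather than on the vector $\vx$.

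For the reverse direction—that any $\mat{A}$ in the displayed polytope maps $\cQ$ into $\cP$—I would verify the defining conditions of $\cP$ in turn. Non-negativity $\mat{A}\vx \geq \vzero$ is immediate from $\mat{A}, \vx \geq \vzero$, and the containment $\mat{A}\vx \in [0,\gamma]^d$ then comes for free from the hypothesis $\cP \subseteq [0,\gamma]^d$ once $\mat{P}(\mat{A}\vx) = \vp$ is established. For the equality, constraints (\ref{eq:column zero}) and (\ref{eq:cons seqf}) immediately collapse $\mat{P}(\mat{A}\vx) = \sum_\sigma \vx[\sigma]\,\mat{P}\mat{A}_{(\sigma)}$ to $\sum_{ja \in \Sigma^\bot} \vx[ja]\,\vb_j$. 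I would close the loop with a bottom-up induction on the game tree, under the invariant that, for every sequence $\sigma$, the partial sum of $\vx[j'a']\,\vb_{j'}$ over terminal descendants $j'a'$ of $\sigma$ equals $\vx[\sigma]\,\vb_j$ when $\sigma = ja$ is nonempty, and equals $\vp$ when $\sigma = \emptyseq$. The inductive step unfolds $\vx[\sigma]$ via the sequence-form flow constraint $\vx[\sigma] = \sum_{a' \in \Actions{j^\star}} \vx[j^\star a']$ for any $j^\star \in \cC_\sigma$ and combines it with (\ref{eq:cart root}); the top level then uses (\ref{eq:cart nonroot}) together with $\vx[\emptyseq] = 1$.

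For the forward direction—every linear $f : \cQ \to \cP$ is representable by some $\mat{A} \in \mtrans\cQ\cP$—the starting point is an arbitrary linear extension $f(\vx) = \mat{B}\vx$ on $\bbR^\Sigma$. I would then use the equalities defining $\cQ$ to massage $\mat{B}$ into the desired structural form: for each non-terminal $\sigma$ in a top-down order, the identity $\vx[\sigma] = \sum_{a' \in \Actions{j^\star}} \vx[j^\star a']$—valid on $\cQ$ for any $j^\star \in \cC_\sigma$—lets me zero out column $\sigma$ by redistributing its mass among the columns indexed by $j^\star a'$, without altering the matrix's action on $\cQ$. The resulting $\mat{A}$ satisfies (\ref{eq:column zero}) with $\mat{A}\vx = f(\vx)$ on all of $\cQ$. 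Setting $\vb_j := \mat{P}\mat{A}_{(ja)}$ for any terminal $ja$ at information set $j$, both the $a$-independence of $\vb_j$ and the hierarchy constraints (\ref{eq:cart root})–(\ref{eq:cart nonroot}) should then follow from an argument dual to the telescoping in the reverse direction, fueled by the fact that $f(\vpi) \in \cP$ for every reduced-normal-form plan $\vpi$.

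The main obstacle will be enforcing the entrywise bound (\ref{eq:cons range}): the column-elimination scheme above is oblivious to magnitudes and, applied carelessly, can inflate the entries of $\mat{A}$ exponentially in the depth of the tree, since each non-terminal column gets added into \emph{all} of its sibling columns at the chosen child information set. To overcome this, I would pick the initial extension $\mat{B}$ carefully so that its columns already sit inside suitably scaled copies of $\cP$—for instance by distributing $f(\vpi) \in \cP$ across the terminal sequences $\tau$ with $\vpi[\tau] = 1$ in a manner that depends linearly on $\vpi$ across all of $\Pi$. Designing this distribution and verifying that the resulting $\mat{A}$ has columns within $[0,\gamma]^d$ is where I expect the technical heart of the proof to lie; perfect recall and the combinatorial structure of the reduced-normal-form plan set $\Pi$ should be the essential ingredients.
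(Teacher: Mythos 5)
Your reverse direction is sound and is essentially the paper's: nonnegativity of $\mat{A}\vx$ is immediate, and the bottom-up telescoping of $\sum_{ja\in\Sigma^\bot}\vx[ja]\,\vb_j$ through the flow constraints together with \eqref{eq:cart root} and \eqref{eq:cart nonroot} is exactly the computation in the paper's proof. In the forward direction, your top-down column-elimination is also the paper's move (\cref{lem:zero}), and your stated ``main obstacle'' is partly a red herring: the upper bound $\mat{A}_{(\sigma)}\le\gamma\vone$ in \eqref{eq:cons range} comes for free at the very end, because once all columns are entrywise nonnegative one can take any $\vx\in\cQ$ with $\vx[\sigma]=1$ and conclude $\mat{A}_{(\sigma)}\le\mat{A}\vx\in\cP\subseteq[0,\gamma]^d$ (also, the elimination adds to each terminal column at most one ancestor column per level, so there is no exponential blow-up to begin with).

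The genuine gap is the entrywise \emph{nonnegativity} of the columns, which your proposal does not resolve. After zeroing the nonterminal columns, the remaining columns of your matrix can have negative entries: $f$ only pins down $\mat{B}$ on the affine hull of $\cQ$, and $\mat{B}\vx\ge\vec{0}$ for all $\vx\in\cQ$ does not force individual columns to be nonnegative, because contributions from different subtrees can compensate one another. Concretely, for a nonterminal $ja$ with children $\cC_{ja}=\{j_1,\dots,j_m\}$, the map $(\vx_1,\dots,\vx_m)\mapsto\sum_i\mat{B}_{\succeq j_i}\vx_i$ on $\SeqfSub{j_1}\times\cdots\times\SeqfSub{j_m}$ lands in $\cP(\vb_j)$, yet the individual block images $\mat{B}_{\succeq j_i}\vx_i$ need be neither nonnegative nor of constant $\mat{P}$-image, so the vectors $\vb_{j_i}$ you need for \eqref{eq:cons seqf} and \eqref{eq:cart root} are not even well defined at this stage. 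The paper's mechanism is \cref{prop:cp} (in particular \cref{lem:cp pos}): exploiting $\vec{0}\notin\aff\SeqfSub{j_i}$, i.e.\ the existence of $\vec{\tau}_i$ with $\vec{\tau}_i^\top\vx_i=1$ on $\SeqfSub{j_i}$, one subtracts from each block a rank-one correction $\vec{\beta}_i\vec{\tau}_i^\top$ with $\vec{\beta}_i[k]=\min_{\vx_i}(\mat{B}_i\vx_i)[k]$ and absorbs the total into the last block; this leaves the action on the product unchanged, makes every block's image nonnegative, and only then do the $\mat{P}$-images become constants $\vb_{j_i}$ summing to $\vb_j$. This normalization must be interleaved with the recursion at every internal node. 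Your proposed alternative---choosing the initial extension so that $f(\vpi)$ is ``distributed'' over the terminal sequences of each plan $\vpi$ linearly and consistently across all of $\Pi$---is not a construction: distinct plans share terminal sequences, and exhibiting such a consistent linear distribution is essentially a restatement of the theorem. As written, the forward direction is incomplete.
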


The proof operates by induction in several steps. At its core, it exploits the combinatorial structure of sequence-form strategy polytopes, which can be decomposed into sub-problems using a series of Cartesian products and convex hulls. A high-level description of the induction is as follows:

\begin{itemize}
    \item The \textit{Base Case} corresponds to the case of being at a leaf decision point. In this case, the set of deviations corresponds to all linear transformations from a probability  $n$-simplex into a given polytope $\mathcal{P}$. This set is equivalent to all $d \times n$ matrices whose columns are points in $\mathcal{P}$, which can easily be verified formally. This corresponds to constraint \eqref{eq:cons seqf} in the characterization of $\mtrans\cQ\cP$.

    \item For the \textit{Inductive Step}, we are at an intermediate decision point $j$, that is, one for which at least one action leads to further decision points.

    \begin{itemize}
        \item In \cref{lem:nonzero} we show that any terminal action $a$ at $j$ leads to a column in the transformation matrix that is necessarily a valid point in the polytope $\mathcal{P}$. This is similar to the base case, and again leads to constraint \eqref{eq:cons seqf} in the characterization.

        \item In \cref{lem:zero}, we look at the other case of a non-terminal action $a$ at $j$. We prove that there always exists an equivalent transformation matrix whose column corresponding to sequence $ja$ is identically $0$ (constraint \eqref{eq:column zero}). This allows for the ``crux" of the transformation to happen in the subtrees below $ja$ or equivalently, the subtrees rooted at the children decision points $\Children{ja}$ of $ja$.
        
        \item A key difficulty to conclude the proof is in using the assumption of the inductive step to characterize all such valid transformations. The set of strategies in the subtrees rooted at the children decision points $\Children{ja}$ is in general the Cartesian product of the strategies in these subtrees. This explains the need for the fairly technical \cref{prop:cp}, whose goal is to precisely characterize valid transformations of Cartesian products. This leads to constraints \eqref{eq:cart nonroot} and \eqref{eq:cart root} in our final characterization.
    \end{itemize}
\end{itemize}

We also remark that while the theorem calls for the polytope $\cP$ to be in the form $\cP = \{\vx\in\bbR^d : \mP \vx = \vp, \vx \ge \vec{0}\}$, with little work the result can also be extended to handle other representations such as $\{\vx\in\bbR^d: \mP \vx \le \vp\}$. We opted for the form specified in the theorem since it most directly leads to the proof, and since the  constraints that define the sequence-form strategy polytope (\cref{def:sf polytope}) are already in the form of the statement.

In particular, by setting $\cP = \cQ$ in \Cref{thm:charac} (in this case, the dimensions of $\mat{P}$ will be $k = |\cJ| + 1$, and $d = |\Seqs|$),
we conclude that the set of linear functions from $\cQ$ to itself is a compact and convex polytope $\mtrans\cQ\cQ \subseteq [0,1]^{\Seqs \times \Seqs}$, defined by $O(|\Sigma|^2)$ linear constraints. As discussed, this polynomial characterization of $\mtrans\cQ\cQ$ is the fundamental insight that enables polynomial-time minimization of linear-swap regret in general extensive-form games.

\subsection{Our No-Linear-Swap Regret Algorithm}

From here, constructing a no-external-regret algorithm for $\mtrans\cQ\cQ$ is relatively straightforward, using standard tools from the rich literature of online learning.
For example, in \Cref{algo:ours}, we propose a solution employing online projected gradient descent~\citep{Gordon1999:Regret, Zinkevich2003:Online}.

\begin{algorithm}[ht]\caption{$\Phi$-Regret minimizer 
for the set $\Phi = \mtrans\cQ\cQ$}\label{algo:ours}
\KwData{$\mA\^1 \in \mtrans\cQ\cQ$ and fixed point $\vx\^1$ of $\mA\^1$, learning rates $\eta\^t > 0$}

\For{$t = 1, 2, \ldots$}{
    Output $\vx\^t$

    Receive $\vell\^t$ and pay $\langle \vell\^t, \vx\^t \rangle$

    Set $\mat{L}\^t = \vell\^t (\vx\^t)^\top$

    $\mA\^{t+1} = \Pi_{\mtrans\cQ\cQ}(\mA\^t - \eta\^t \mat{L}\^t) = \argmin_{\mat{Y} \in {\mtrans\cQ\cQ}} \| \mA\^t - \eta\^t \mat{L}\^t - \mat{Y} \|^2_F$   \label{algo:line:proj}

    Compute a fixed point $\vx\^{t+1} = \mA\^{t+1}\vx\^{t+1}\in\cQ$ of matrix $\mA\^{t+1}$   \label{algo:line:fixed}
}
\end{algorithm}

Combining that no-external-regret algorithm for $\Phi$ with the construction by \citet{Gordon08:No}, we can then establish the following linear-swap regret and iteration complexity bounds for \Cref{algo:ours}.

\begin{restatable}[Informal]{theorem}{thmalgo}\label{thm:algo}
Let $\Sigma$ denote the set of sequences of the learning player in the extensive-form game, and let $\eta\^t = 1/\sqrt{t}$ for all $t$. Then, for any sequence of loss vectors $\vell\^t \in [0, 1]^{\Seqs}$, \cref{algo:ours} guarantees linear-swap regret $O(|\Seqs|^2 \sqrt{T})$ after any number $T$ of iterations, and runs in $O(\poly(|\Seqs|) \log^2{t})$ time for each iteration $t$.
\end{restatable}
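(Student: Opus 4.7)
The plan is to combine the reduction template of \Cref{thm:phi_regret_framework} with a standard analysis of online projected gradient descent (OGD), run on the polytope $\mtrans\cQ\cQ$, leveraging the polynomial characterization supplied by \Cref{thm:charac}. By \Cref{thm:phi_regret_framework}, it suffices to bound the external regret of the inner OGD instance that acts on $\mtrans\cQ\cQ$ and receives the linear losses $\mat{L}\^{t} = \vell\^{t}(\vx\^{t})^\top$ at each round $t$. I would then invoke the textbook OGD guarantee with step sizes $\eta\^t = 1/\sqrt{t}$: the external regret after $T$ rounds is $O(D\, G\, \sqrt{T})$, where $D$ is the Frobenius diameter of the feasible set and $G$ is an a priori upper bound on the Frobenius norm of the loss matrices.

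Bounding $D$ and $G$ is then immediate. From \Cref{thm:charac} (instantiated with $\cP = \cQ$, whence the ambient polytope sits in $[0,1]^{\Seqs\times\Seqs}$), the Frobenius diameter satisfies $D = O(|\Seqs|)$. For the losses, any $\vx \in \cQ$ and any $\vell\^t \in [0,1]^{\Seqs}$ satisfy $\nm{\vx}_2, \nm{\vell\^t}_2 \le \sqrt{|\Seqs|}$ (the former because sequence-form strategies have entries in $[0,1]$), so $\nm{\mat{L}\^t}_F = \nm{\vell\^t}_2\,\nm{\vx\^t}_2 \le |\Seqs|$. Plugging these into the OGD bound yields the advertised $O(|\Seqs|^2\sqrt T)$ linear-swap regret.

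For the per-iteration complexity, the only two nontrivial primitives are the projection at \Cref{algo:line:proj} and the fixed-point computation at \Cref{algo:line:fixed}. By \Cref{thm:charac}, $\mtrans\cQ\cQ$ is cut out by $O(|\Seqs|^2)$ explicit linear constraints, so projecting onto it is a convex quadratic program over $O(|\Seqs|^2)$ variables, which an interior-point method solves to accuracy $\epsilon$ in time $\poly(|\Seqs|)\log(1/\epsilon)$. Likewise, computing a fixed point reduces to the linear feasibility program $\{\vx \in \cQ : \mat{A}\^{t+1}\vx = \vx\}$, which is solved in the same asymptotic time. Setting $\epsilon = 1/\poly(t)$ keeps the accumulated approximation error below the leading $O(|\Seqs|^2\sqrt T)$ term and yields the stated $\poly(|\Seqs|)\log^2 t$ time per iteration.

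The part that requires the most care is precisely this propagation of inexactness: the vanilla OGD bound assumes exact Euclidean projections and an exact fixed point, but \Cref{algo:ours} obtains both through a numerical solver. The bookkeeping step is therefore to (i) make the standard OGD regret argument robust to a per-round projection slack $\epsilon\^t$, showing the cumulative degradation behaves like $\sum_t \epsilon\^t \cdot G$, and (ii) separately account for the fixed-point slack in the reduction of \Cref{thm:phi_regret_framework}, so that $\vx\^t \approx \mA\^t \vx\^t$ still gives a deterministic loss on $\vx\^t$ whose difference from the idealized $\mA\^t$-version is summably small. Choosing $\epsilon\^t = 1/\poly(t)$ suffices for both, and absorbs only one extra $\log t$ factor into the per-iteration interior-point cost, which is the source of the $\log^2 t$ rather than $\log t$ in the runtime bound.
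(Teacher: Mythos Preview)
Your proposal is correct and follows essentially the same route as the paper's proof: reduce linear-swap regret to external regret on $\mtrans\cQ\cQ$ via \Cref{thm:phi_regret_framework}, run OGD with the rank-one losses $\mat{L}\^t=\vell\^t(\vx\^t)^\top$, use the bound $\mtrans\cQ\cQ\subseteq[0,1]^{\Seqs\times\Seqs}$ from \Cref{thm:charac} to get $D,G=O(|\Seqs|)$ and hence $O(|\Seqs|^2\sqrt{T})$ regret, and finally argue that per-round approximation error of order $1/\poly(t)$ in the projection is absorbed into the regret bound while contributing the extra $\log t$ factor in the runtime. The only cosmetic differences are that the paper instantiates the projection solver as the ellipsoid method (via a reduction to SDP with a Cholesky-based separation oracle) rather than an interior-point method, and that the paper does not separately track slack in the fixed-point computation---you go slightly further than the paper in flagging that second source of inexactness.
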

The formal version of the theorem is given in \Cref{thm:formal}.
    It is worth noting that the polynomial-sized description of $\mtrans\cQ\cQ$ is crucial in establishing the polynomial running time of the algorithm, both in the projection step \eqref{algo:line:proj} and in the fixed point computation step \eqref{algo:line:fixed}.
We also remark that the choice of online projected gradient descent combined with the ellipsoid method for projections were arbitrary and the useful properties of $\mtrans\cQ\cQ$ are retained when using it with any efficient regret minimizer.

\section{Linear-Deviation Correlated Equilibrium}
\label{sec:lce}

As we discussed in the preliminaries, when all players in a game employ no-$\Phi$-regret learning algorithms, then the empirical frequency of play converges to the set of $\Phi$-equilibria almost surely. Similarly, when $\Phi = \mtrans\cQ\cQ$ the players act based on ``no-linear-swap regret" dynamics and converge to a notion of $\Phi$-equilibrium we call \textit{\oureqfull} (\oureq). In this section we present some notable properties of the \oureq. In particular, we discuss its relation to other already established equilibria, as well as the computational tractability of optimal equilibrium selection.

\subsection{Relation to CE and EFCE}

The $\Phi$-regret minimization framework, offers a natural way to build a hierarchy of the corresponding $\Phi$-equilibria based on the relationship of the $\Phi$ sets of deviations. In particular, if for the sets $\Phi_1, \Phi_2$ it holds that $\Phi_1 \subseteq \Phi_2$, then the set of $\Phi_2$-equilibria is a subset of the set of $\Phi_1$-equilibria. Since the Correlated Equilibrium is defined using the set of all swap deviations, we conclude that any $\Phi$-equilibrium, including the \oureq, is a superset of CE. What is the relationship then of \oureq with the extensive-form correlated equilibrium (EFCE)? \cite{Farina22:Simple} showed that the set $\Phi^{\text{EFCE}}$ inducing EFCE is the set of all ``trigger deviations", which can be expressed as linear transformations of extensive-form strategies. Consequently, the set $\Phi^{\text{EFCE}}$ is a subset of all linear transformations and thus, it holds that $\text{CE} \subseteq \text{\oureq} \subseteq \text{EFCE}$. In examples \ref{ex:efce_lce} and \ref{ex:ce_lce} of the appendix we show that there exist specific games in which either $\text{CE} \neq \text{\oureq}$, or $\text{\oureq} \neq \text{EFCE}$. Hence, we conclude that the previous inclusions are strict and it holds $\text{CE} \subsetneq \text{\oureq} \subsetneq \text{EFCE}$.

For \cref{ex:efce_lce} we use a signaling game from \cite{vonStengel2008} with a known EFCE and we identify a linear transformation that is not captured by the trigger deviations of EFCE. Specifically, it is possible to perform linear transformations on sequences of a subtree based on the strategies on other subtrees of the TFSDP. For \cref{ex:ce_lce} we have found a specific game through computational search that has a \oureq which is not a normal-form correlated equilibrium. To do that we identify a particular normal-form swap that is non-linear.

\paragraph{Empirical evaluation} To further illustrate the separation between no-linear-swap-regret dynamics and no-trigger-regret dynamics, used for EFCE, we provide experimental evidence that minimizing linear-swap-regret also minimizes trigger-regret (\cref{fig:experiments}, left), while minimizing trigger-regret does \textit{not} minimize linear-swap regret. Specifically, in \cref{fig:experiments} we compare our no-linear-swap-regret learning dynamics (given in \Cref{algo:ours}) to the no-trigger-regret algorithm introduced by \citet{Farina22:Simple}. More details about the implementation of the algorithms is available in \Cref{app:experimental details}. In the left plot, we measure on the y-axis the average trigger regret incurred when all players use one or the other dynamics. Since trigger deviations are special cases of linear deviations, as expected, we observe that both dynamics are able to minimize trigger regret. Conversely, in the right plot of \Cref{fig:experiments}, the y-axis measures linear-swap-regret. We observe that while our dynamics validate the sublinear regret performance proven in \Cref{thm:algo}, the no-trigger-regret dynamics of \citet{Farina22:Simple} exhibit an erratic behavior that is hardly compatible with a vanishing average regret. This suggests that no-linear-swap-regret is indeed a strictly stronger notion of hindsight rationality.

\begin{figure}[htp]
    \centering
    \includegraphics[width=\textwidth]{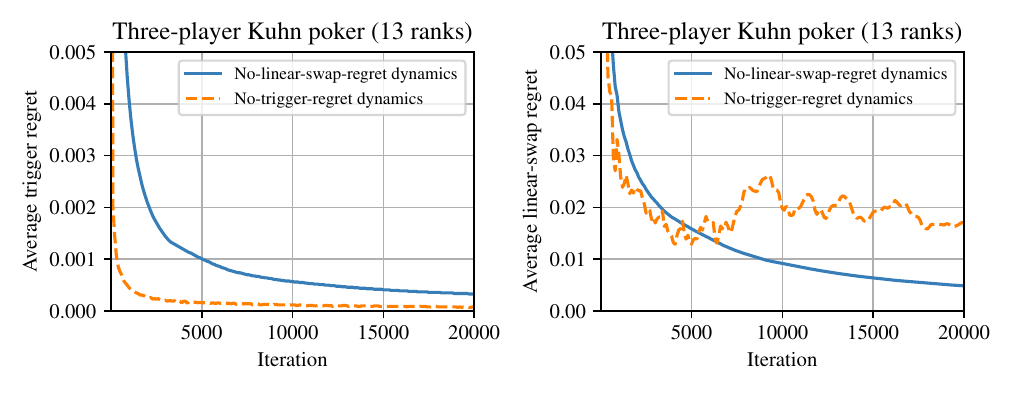}
    \caption{(Left) Average trigger regret per iteration for both a linear-swap-regret minimizer and a trigger-regret minimizer. (Right) Average linear-swap regret per iteration for the same two minimizers.}
    \label{fig:experiments}
\end{figure}

\subsection{Hardness of Maximizing Social Welfare}

In many cases we are interested in knowing whether it is possible to select an Equilibrium with maximum Social Welfare. Let MAXPAY-\oureq be the problem of finding an \oureq in EFGs that maximizes the sum (or any linear combination) of all player's utilities. Below, we prove that we cannot efficiently solve MAXPAY-\oureq, unless P=NP, even for 2 players if chance moves are allowed, and even for 3 players otherwise. We follow the structure of the same hardness proof for the problem MAXPAY-CE of finding an optimal CE in EFGs. Specifically, \cite{vonStengel2008} use a reduction from SAT to prove that deciding whether MAXPAY-CE can attain the maximum value is NP-hard even for 2 players. To do that, they devise a way to map any SAT instance into a polynomially large game tree in which the root is the chance player, the second level corresponds to one player, and the third level corresponds to the other player. The utilities for both players are exactly the same, thus the players will have to coordinate to maximize their payoff irrespective of the linear combination of utilities we aim to maximize.

\begin{restatable}{theorem}{thmhardness}\label{thm:welfare_hardness}
For two-player, perfect-recall extensive-form games with chance moves, the problem MAXPAY-\oureq is not solvable in polynomial time, unless P=NP.
\end{restatable}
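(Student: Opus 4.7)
The plan is to reuse the SAT-to-MAXPAY-CE reduction of \cite{vonStengel2008} essentially unchanged, exploiting the fact that their construction produces a common-payoff game whose welfare-maximizing joint distribution already encodes SAT satisfiability. Given a SAT instance $\varphi$ with $m$ clauses, I would construct the same polynomial-sized two-player, perfect-recall EFG $G_\varphi$ with chance at the root: chance draws a clause uniformly at random, then Player~1 and Player~2 play in sequence at their (uninformed) information sets to choose variable assignments, and both receive an identical team payoff equal to $1$ if the joint output is consistent and satisfies the drawn clause, and $0$ otherwise. The expected welfare of any pure profile in $G_\varphi$ is therefore $2\cdot(\text{fraction of clauses satisfied})$.

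The heart of the argument is a sandwich bound
\[
\max_{\mu \in \mathrm{CE}(G_\varphi)} \E_\mu[u_1+u_2]
\;\le\; \max_{\mu \in \text{\oureq}(G_\varphi)} \E_\mu[u_1+u_2]
\;\le\; \max_{\mu \in \Delta(\Pure_1 \times \Pure_2)} \E_\mu[u_1+u_2].
\]
The left inequality uses $\mathrm{CE}\subseteq\text{\oureq}$ (discussed in \Cref{sec:lce}); the right one holds because every \oureq{} is, by definition, a joint distribution over pure profiles whose welfare is trivially bounded by the welfare of the cooperatively optimal joint distribution. I would then verify that both endpoints of the sandwich take the value $2$ if and only if $\varphi$ is satisfiable. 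For the right endpoint: if $\varphi$ is satisfiable, putting all mass on the pure profile in which both players play a common satisfying assignment achieves welfare $2$; if $\varphi$ is unsatisfiable, every pure profile misses at least one clause, so welfare is strictly below $2$, and since expected welfare is linear in $\mu$ no mixing can improve it. For the left endpoint: whenever $\varphi$ is satisfiable, the team-optimal pure profile just described is a pure Nash equilibrium of the common-payoff game $G_\varphi$ (a unilateral deviation can only weakly decrease one's own, and hence both, payoffs), hence a CE, hence an \oureq{}. Combining these facts yields $\max_{\text{\oureq}(G_\varphi)}\E[u_1+u_2]=2$ iff $\varphi\in\mathrm{SAT}$, which is the desired NP-hardness of MAXPAY-\oureq{}.

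The main technical concern is whether enlarging the deviation class from the swap deviations defining CE to \emph{all} linear deviations defining \oureq{} could widen the welfare optimum beyond the CE value on this specific instance. The sandwich above makes this worry vacuous: the rightmost quantity is a universal welfare upper bound shared by every notion of correlated equilibrium, because any such equilibrium is a distribution over pure profiles and its welfare is just the expected team utility. No geometric or incentive-compatibility reasoning specific to \oureq{} is required---the common-payoff structure of the von Stengel--Forges game collapses all three terms of the sandwich to a single value that depends only on satisfiability of $\varphi$, and the \oureq{} case is simply squeezed between them, inheriting the NP-hardness of the CE case verbatim.
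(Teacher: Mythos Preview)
Your proposal is correct and follows essentially the same approach as the paper: both reuse the von Stengel--Forges SAT construction verbatim and argue that on this common-payoff instance the welfare-maximizing \oureq{} has value $2$ iff $\varphi$ is satisfiable. Your sandwich $\mathrm{CE}\subseteq\text{\oureq}\subseteq\Delta(\Pure_1\times\Pure_2)$ is a slightly cleaner packaging---in particular, you explicitly verify that the satisfying pure profile is a Nash equilibrium (hence CE, hence \oureq), a step the paper leaves implicit---but the substance is identical. One minor note: your informal description of the game (``both players choose variable assignments'') does not match the actual von Stengel--Forges construction, in which Player~2 sees the drawn clause and selects a literal from it, while Player~1 sees only the variable and selects its truth value; since you explicitly defer to the cited construction this is harmless, but you should restate the game accurately when writing it up.
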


\begin{remark}
    The problem retains its hardness if we remove the chance node and add a third player instead. As showed in \cite{vonStengel2008}, in that case we can always build a polynomially-sized game tree that forces the third player to act as a chance node.
\end{remark}

\section{Conclusions and Future Work}

In this paper we have shown the existence of uncoupled no-linear-swap regret dynamics with polynomial-time iteration complexity in the game tree size in any extensive-form game. This significantly extends prior results related to extensive-form correlated equilibria, and begets learning agents that learn not to regret a significantly larger set of strategy transformations than what was known to be possible before. A crucial technical contribution we made to establish our result, and which might be of independent interest, is providing a polynomial characterization of the set of all linear transformations from a sequence-form strategy polytope to itself. Specifically, we showed that such a set of transformations can be expressed as a convex polytope with a polynomial number of linear constraints, by leveraging the rich combinatorial structure of the sequence-form strategies. Moreover, these no-linear-swap regret dynamics converge to linear-deviation correlated equilibria in extensive-form games, which are a novel type of equilibria that lies strictly between normal-form and extensive-form correlated equilibria.

These new results leave open a few interesting future research directions. Even though we know that there exist polynomial-time uncoupled dynamics converging to \oureqfull, we conjecture that it is also possible to obtain an efficient centralized algorithm similar to the Ellipsoid Against Hope for computing EFCE in extensive-form games by \cite{huang2008computing}. Additionally, it is an intriguing question to understand whether a no-linear-swap regret algorithm exists that achieves $O(\log T)$ regret per-player, as is the case for no-trigger regret \citep{Anagnostides23:NearOptimal}. Furthermore, it would be interesting to further explore problems of equilibrium selection related to \oureq, possibly by devising suitable Fixed-Parameter Algorithms in the spirit of \cite{Zhang2022:Optimal}. Finally, the problem of understanding what is the most hindsight rational type of deviations based on which we can construct \emph{efficient} regret minimizers in extensive-form games remains a major open question.

\bibliographystyle{unsrtnat}
\bibliography{references}

\clearpage
\appendix

\section{Additional Extensive-Form Game Notation}

In the proofs, we will make use of the following symbols and notation.
\begin{table}[H]
    \begin{tabular}{p{1.5cm}p{11.5cm}}
    \toprule
        \centering\textbf{Symbol} & \textbf{Description}\\
    \midrule
        \centering$\cJ_{i}$ & Set of all Player~$i$'s infosets.\\
        \centering $\Actions j$ & Set of actions available at any node in the information set $j$.\\
        \centering $\Seqs_{i}$ & Set of sequences for Player~$i$, defined as $\Seqs^{(i)} \defeq \{(j,a) : j \in \cJ, a \in A_j\} \cup \{\emptyseq\}$,\\
        \centering $\emptyseq$ & where the special element $\emptyseq$ is called the \emph{empty sequence}.\\
        \centering $\Sigma^\bot_{i}$ & Set of terminal sequences for Player~$i$.\\
    \midrule
        \centering $\parent{j}$ & Parent sequence of $j$, defined as the last pair $(j, a)\in\Seqs[i]$ encountered on the path from the root to any information set $j$.\\
        \centering $\Children{\sigma}$ & Set of all ``children" of sequence $\sigma$, defined as the information sets $j \in \DecNodes$ having as parent $\parent{j} = \sigma$.\\
        \centering$j' \prec j$ & Information set $j\in\cJ$ is an ancestor of $j'\in\cJ$, that is, there exists a path in the game tree connecting a node $h\in j$ to some node $h'\in j'$. \\
        \centering$\sigma\prec\sigma'$ & Sequence $\sigma$ precedes sequence $\sigma'$, where $\sigma,\sigma'$ belong to the same player. \\
        \centering $\sigma\succeq j$ & Sequence $\sigma=(j',a')$ is such that $j' \succeq j$.\\
        \centering $\SeqsSub{j}$ & Sequences at $j \in \cJ$ and all of its descendants, $\SeqsSub{j}\defeq \{\sigma\in\Seqs: \sigma \succeq j\}$. \\
    \midrule
        \centering $\Seqf_{i}$ & Sequence-form strategies of Player~$i$ (\cref{def:sf polytope}).\\
        \centering $\SeqfSub{j}$ & Sequence-form strategies for the subtree rooted at $j \in \cJ$ (\cref{def:Qj}).\\
        \centering $\Pure_{i}$ & Reduced-normal-form plans (a.k.a. deterministic sequence-form strategies) of Player~$i$.\\
        \centering $\PureSub{j}$ & Reduced-normal-form plans (a.k.a. deterministic sequence-form strategies) for the subtree rooted at $j \in \cJ$.\\
    \bottomrule
    \end{tabular}\vspace{1mm}
    \caption{Summary of game-theoretic notation used in this paper. Note that we might skip player-specific subscripts when they can be inferred.}
    \label{tab:notation}
\end{table}
Furthermore, when the subscript referring to players can be inferred or is irrelevant (that is, the quantities refer to a generic player), then we might skip it.

As hinted by some of the rows in the above table, we will sometimes find it important to consider partial strategies that only specify behavior at a decision node $j$ and all of its descendants $j' \succ j$. We make this formal through the following definition.
\begin{definition}\label{def:Qj}
    The \emph{set of sequence-form strategies for the subtree} \emph{rooted at $j$}, denoted $\SeqfSub{j}$, is the set of all vectors $\vx \in \bbR_{\geq 0}^{\SeqsSub{j}}$ such that probability-mass-conservation constraints hold at decision node $j$ and all of its descendants $j' \succ j$, specifically
    \[
        \SeqfSub{j}\defeq \mleft\{  \vx\in\mathbb{R}_{\ge0}^{\SeqsSub{j}} :
        \begin{array}{lll}
            \constraint & \sum_{a\in \Actions{j}} \vx[ja]=1                                          \\[1mm]
            \constraint & \sum_{a\in \Actions{j'}}\vx[j' a] = \vx[\parent{j'}] & \forall\, j'\succ j
        \end{array}
        \mright\}.
    \]
\end{definition}

Finally, we define the symbol $\Sigma^\bot_{i}$ to be the set of all terminal sequences for Player~$i$. Thus, the set $\Seqs_{i} \setminus \Sigma^\bot_{i}$ would give us all the non-terminal sequences of that player.

\paragraph{Access to coordinates} By definition, sequence-form strategies are vectors indexed by sequences. To access the coordinate corresponding to sequence $\sigma$, we will use the notation $\vx[\sigma]$. Occasionally, we will need to extract a subvector corresponding to all sequences that are successor of an information set $j$, that is, all sequences $\sigma \succeq j$. For that, we use the notation $\vx[\succeq j]$.  

\paragraph{Remark on the structure of sequence-form strategies}
We further remark the following known fact about the structure of sequence-form strategies. Intuitively, it crystallizes the idea that sequence-form strategies encode product of probabilities of actions on the path from the root to any decision point. The proof follows directly from the definitions.

\begin{lemma}\label{lem:seqf structure}
    Let $j \in \cJ_i$ be an information set for a generic player. Then, given any sequence-form strategy $\vx \in \SeqfSub{j}$, action $a \in \mathcal{A}_j$, and child information set $j' \in \mathcal{C}_{ja}$, there exists a sequence-form strategy $\vx_{\succeq j'} \in \SeqfSub{j'}$ such that
    \[
        \vx[\succeq j'] = \vx[ja] \vx_{\succeq j'}.
    \]
\end{lemma}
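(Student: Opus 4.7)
The plan is to prove the lemma by a direct verification from the definition of $\SeqfSub{j}$ (\Cref{def:Qj}), splitting into two cases depending on whether $\vx[ja]=0$ or $\vx[ja]>0$. In both cases the candidate $\vx_{\succeq j'}$ is essentially obtained by ``renormalization'' of the subvector $\vx[\succeq j']$ by the scalar $\vx[ja]$.

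In the positive case $\vx[ja]>0$, I would define $\vx_{\succeq j'} \defeq \vx[\succeq j']/\vx[ja] \in \bbR_{\ge 0}^{\SeqsSub{j'}}$, and verify that it lies in $\SeqfSub{j'}$ by checking the two constraint families of \Cref{def:Qj}. For the root-mass constraint at $j'$, the mass conservation in $\SeqfSub{j}$ at the information set $j'$ gives $\sum_{a'\in\Actions{j'}}\vx[j'a'] = \vx[\parent{j'}] = \vx[ja]$, so dividing both sides by $\vx[ja]$ yields $\sum_{a'\in\Actions{j'}}\vx_{\succeq j'}[j'a']=1$. For every descendant $j''\succ j'$, the mass conservation $\sum_{a''}\vx[j''a''] = \vx[\parent{j''}]$ scales on both sides by $1/\vx[ja]$, giving the required equation for $\vx_{\succeq j'}$. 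Then by construction $\vx[ja]\,\vx_{\succeq j'}=\vx[\succeq j']$.

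The only mildly delicate case is $\vx[ja]=0$. Here one observes that the mass-conservation constraint at $j'$ forces $\sum_{a'\in\Actions{j'}}\vx[j'a']=\vx[ja]=0$, and since each $\vx[j'a']\ge 0$ every such entry is zero; propagating this argument down the subtree by induction on depth shows that $\vx[\succeq j'] = \vzero$. Consequently the identity $\vx[\succeq j'] = \vx[ja]\,\vx_{\succeq j'}$ holds trivially for \emph{any} choice of $\vx_{\succeq j'}\in\SeqfSub{j'}$; and $\SeqfSub{j'}$ is non-empty since, e.g., any deterministic plan in the subtree rooted at $j'$ provides a witness.

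No real obstacle arises: the proof is a one-line rescaling once the $\vx[ja]>0$ case is isolated, and the $\vx[ja]=0$ case reduces to a short induction on the subtree depth to conclude $\vx[\succeq j']=\vzero$. The argument is essentially a translation of the ``sequence-form strategies encode products of action probabilities along the path from the root'' intuition mentioned just before the lemma statement.
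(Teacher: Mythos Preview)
Your proposal is correct and matches the paper's approach: the paper itself does not spell out a proof, stating only that ``the proof follows directly from the definitions,'' and your case split on $\vx[ja]>0$ versus $\vx[ja]=0$ with the obvious rescaling is precisely the direct verification the paper has in mind.
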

\section{Proof of the Characterization Theorem (\Cref{thm:charac})}
\label{sec:charac proof}

In this section we prove the central result of this paper, the characterization given in \Cref{thm:charac} of linear functions from the sequence-form strategy polytope $\cQ$ to the generic
polytope $\cP \defeq \{\vx\in\bbR^d : \mat{P}\vx = \vec{p}, \vx \ge \vec{0}\}$, where $\mat{P} \in \bbR^{k \times d}$ and $\vp \in \bbR^k$.

We will prove the characterization theorem by induction on the structure of the extensive-form strategy polytope. To do so, it will be useful to introduce a few additional objects and notations. We do so in the next subsection.

\subsection{Additional Objects and Notation Used in the Proof}
First, we introduce a parametric version of the polytope $\cP$, where the right-hand side vector is made variable.
\begin{definition}
    Given any $\vb \in \bbR^k$, we will denote with $\cP(\vb)$ the polytope
    \[
        \cP(\vb) \defeq \{\vx\in\bbR^d: \mat{P}\vx = \vb, \vx \geq \vec{0}\}.
    \]
    In particular, $\cP = \cP(\vp)$.
\end{definition}
Furthermore, we introduce the equivalence relation $\cong_\cD$ to indicate that two matrices induce the same linear function when restricted to domain $\cD$.
\begin{definition}
    Given two matrices $\mat{A}, \mat{B}$ of the same dimension, we write $\mat{A} \cong_{\cD} \mat{B}$ if $\mat{A}\vx = \mat{B}\vx$ for all $\vx \in \cD$. Similarly,
    given two sets $\cU, \cV$ of matrices we write $\cU \cong_\cD \cV$ to mean that for any $\mat{A} \in \cU$ there exists $\mat{B} \in \cV$ with $\mat{A} \cong_\cD \mat{B}$, and vice versa.
\end{definition}

Additionally, we introduce a symbol to denote the set of all matrices that induce linear functions from a set $\cU$ to a set $\cV$.

\begin{definition}
    Given any sets $\cU, \cV$ we denote with $\lintrans\cU\cV$ the set of all matrices that induce linear transformations from $\cU$ to $\cV$, that is,
    \[
        \lintrans\cU\cV \defeq \mleft\{ \mA : \mA \vx \in \cV \text{ for all } \vx \in \cU \mright\}.
    \]
\end{definition}

Finally, we remark that for a matrix $\mA$ whose columns are indexed using sequences $\sigma \in \Seqs$, we represent its columns as $\mA_{(\sigma)}$. Furthermore, for sequence-form strategies $\vx \in \Seqf$, we use $\vx[\sigma]$ to represent their entries, and $\vx[\SeqsSub{j}]$ to represent a vector consisting only of the entries corresponding to sequences $\sigma \in \SeqsSub{j}$.

\subsection{A Key Tool: Linear Transformations of Cartesian Products}

We are now ready to introduce the following Proposition, which will play an important role in the proof of \Cref{thm:charac}.

\begin{proposition}\label{prop:cp}
    Let $\cU_1, \dots, \cU_m$ be sets, with $\vec{0} \notin \aff \cU_i$ \footnote{Instead of the condition $\vec{0} \notin \aff \cU_i$, we could equivalently state that there exists $\vec{\tau}_i$ such that $\vec{\tau}_i^\top \vx_i = 1$ for all $\vx_i \in \cU_i$ using the properties of affine sets.} for all $i = 1,\dots,m$. Furthermore, for any $i = 1,\dots,m$ and any $\vb_i\in\bbR^k$, let $\mtrans {\cU_i} {\cP(\vb_i)}$ be such that $\mtrans {\cU_i} {\cP(\vb_i)} \cong_{\cU_i} \lintrans {\cU_i} {\cP(\vb_i)}$. Then, for all $\vb\in\bbR^k$,
    \[
        \lintrans {(\cU_1 \times\dots\times\cU_m)} {\cP(\vb)} \cong_{(\cU_1 \times\dots\times\cU_m)}
        \mleft\{
        \big[\mA_1 \mid \dots \mid \mA_m\big]: \begin{array}{r@{\hskip1.5mm}ll}
            \constraint[eq:cp bi]   & \mA_i \in \mtrans {\cU_i} {\cP(\vb_i)} & \forall\, i \in \{ 1, \dots, m \} \\
            \constraint[eq:cp b]    & \vb_1 + \dots + \vb_m = \vb                                                \\
            \constraint[eq:cp last] & \vb_i \in \bbR^k                       & \forall\, i \in \{ 1, \dots, m \} \\
        \end{array}
        \mright\}.
        \numberthis{eq:charac cartesian}
    \]
\end{proposition}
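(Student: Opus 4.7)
The plan is to establish both directions of the $\cong_{(\cU_1 \times \cdots \times \cU_m)}$-equivalence, reading a block matrix as $[\mA_1 \mid \cdots \mid \mA_m](\vx_1, \dots, \vx_m)^\top = \sum_{i=1}^m \mA_i \vx_i$. The hypothesis $\vec{0} \notin \aff \cU_i$ supplies vectors $\vec{\tau}_i$ with $\vec{\tau}_i^\top \vx_i = 1$ for all $\vx_i \in \cU_i$, which will serve as a ``rank-one shift'' device $\vd_i \vec{\tau}_i^\top$ that behaves like a constant translation $\vd_i$ when applied to any $\vx_i \in \cU_i$. The inclusion $\supseteq$ is then immediate: if $\mA_i \in \mtrans{\cU_i}{\cP(\vb_i)}$ with $\sum_i \vb_i = \vb$, the hypothesis $\mtrans{\cU_i}{\cP(\vb_i)} \cong_{\cU_i} \lintrans{\cU_i}{\cP(\vb_i)}$ yields $\mA_i \vx_i \in \cP(\vb_i)$ on $\cU_i$, and summing gives $\sum_i \mA_i \vx_i \in \cP(\vb)$ by \eqref{eq:cp b}.

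For the nontrivial direction $\subseteq$, let $[\mA_1 \mid \cdots \mid \mA_m]$ be a matrix such that $\sum_i \mA_i \vx_i \in \cP(\vb)$ on the whole product. First I fix arbitrary baselines $\vx_j^0 \in \cU_j$ and vary one coordinate $\vx_i$ at a time: equating the $\mP$-images of the two sums forces $\mP\mA_i\vx_i$ to be \emph{constant} on $\cU_i$, of value $\vb_i^* \defeq \mP\mA_i\vx_i^0$, and clearly $\sum_i \vb_i^* = \vb$. The only obstacle to using these $\vb_i^*$'s and $\mA_i$'s as a decomposition is that individual $\mA_i \vx_i$'s need not be componentwise nonnegative; only the sum is guaranteed to lie in $\bbR_{\ge 0}^d$.

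The main difficulty, and crux of the proof, is to build a nonnegativity-preserving shift. For each $i$ and coordinate $k$ set $\vec{l}_i^{(k)} \defeq \inf_{\vx_i \in \cU_i}(\mA_i\vx_i)_k$; since the variables $\vx_1, \dots, \vx_m$ range independently, the infimum distributes over the sum, giving
\[
    \sum_{i=1}^m \vec{l}_i^{(k)} \;=\; \inf_{(\vx_1,\dots,\vx_m)} \Big(\sum_i \mA_i \vx_i\Big)_k \;\ge\; 0,
\]
which also forces every $\vec{l}_i^{(k)}$ to be finite (otherwise the sum would be $-\infty$). Since $\sum_i \vec{l}_i \ge \vec{0}$ componentwise, I can choose vectors $\vd_i$ with $\vd_i \ge -\vec{l}_i$ and $\sum_i \vd_i = \vec{0}$ (e.g.\ $\vd_i = -\vec{l}_i + \tfrac{1}{m}\sum_j \vec{l}_j$). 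Define $\mA'_i \defeq \mA_i + \vd_i \vec{\tau}_i^\top$; on $\cU_i$ it acts as $\mA'_i \vx_i = \mA_i \vx_i + \vd_i \ge \vec{l}_i + \vd_i \ge \vec{0}$, with $\mP\mA'_i\vx_i = \vb_i^* + \mP\vd_i \defeq \vb_i$, so $\mA'_i \in \lintrans{\cU_i}{\cP(\vb_i)}$, and the new right-hand sides still satisfy $\sum_i \vb_i = \sum_i \vb_i^* = \vb$. Finally, applying the assumed equivalence $\lintrans{\cU_i}{\cP(\vb_i)} \cong_{\cU_i} \mtrans{\cU_i}{\cP(\vb_i)}$ produces $\mA''_i \in \mtrans{\cU_i}{\cP(\vb_i)}$ with $\mA''_i \cong_{\cU_i} \mA'_i$, and on the product set
\[
    \sum_i \mA''_i \vx_i \;=\; \sum_i \mA'_i \vx_i \;=\; \sum_i \mA_i \vx_i + \sum_i \vd_i \;=\; \sum_i \mA_i \vx_i,
\]
so $[\mA''_1 \mid \cdots \mid \mA''_m]$ lies in the right-hand side of \eqref{eq:charac cartesian} and is $\cong_{(\cU_1 \times \cdots \times \cU_m)}$-equivalent to $[\mA_1 \mid \cdots \mid \mA_m]$, completing the proof.
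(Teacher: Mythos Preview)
Your proof is correct and follows essentially the same approach as the paper: both directions are argued the same way, with the key step being a rank-one shift $\vd_i\vec\tau_i^\top$ (enabled by $\vec\tau_i^\top\vx_i=1$ on $\cU_i$) that redistributes mass among the blocks so that each $\mA'_i\vx_i$ becomes nonnegative while the total sum is unchanged. The only cosmetic differences are that the paper subtracts the coordinatewise minimum from blocks $1,\dots,m-1$ and dumps the accumulated slack into block $m$, whereas you spread the slack symmetrically via $\vd_i=-\vec l_i+\tfrac1m\sum_j\vec l_j$, and you work with $\inf$ rather than $\min$ (which avoids any implicit compactness assumption on the $\cU_i$).
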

\begin{proof}
    We prove the result by showing the two directions of the inclusion separately.
    \begin{itemize}
        \item[($\supseteq$)] First, we show that for any $\vb\in\bbR^k$, any matrix $\mA = [\mA_1 \mid \dots \mid \mA_m]$ that belongs to the set on the right-hand side of \eqref{eq:charac cartesian} induces a linear transformation from $\cU_1\times\dots\times\cU_m$ to $\cP(\vb)$ and thus belongs to $\lintrans {(\cU_1 \times\dots\times\cU_m)} {\cP(\vb)}$. To that end, we note that for any $\vx = (\vx_1, \dots, \vx_m) \in \cU_1\times\dots\times\cU_m$,
            \[
                \mA\vx = \sum_{i=1}^m \mA_i \vx_i \overset{\eqref{eq:cp bi}}{\ge}\vec{0}, \qquad \text{and}\quad \mat{P}(\mA\vx) = \sum_{i=1}^m \mP \mA_i \vx_i \equalref{eq:cp bi} \sum_{i=1}^m \vb_i \equalref{eq:cp b} \vb,
            \]
            where in both cases we used the fact that $\mA_i$ maps any point in $\cU_i$ to a point in $\cP(\vb_i) = \{\vy: \mP\vy = \vb_i, \vy \ge 0\}$ by \eqref{eq:cp bi}. Hence $\mA \vx \in \cP(\vb)$ for all $\vx \in \cU_1 \times\dots\times\cU_m$, as we wanted to show.
        \item[($\subseteq$)] We now look at the converse, showing that for any $\vb \in \bbR^k$ and matrix $\mB = [\mB_1 \mid\dots\mid\mB_m] \in \lintrans {(\cU_1 \times\dots\times\cU_m)} {\cP(\vb)}$, there exists a matrix $\mA = [\mA_1 \mid \dots \mid \mA_m]$ that satisfies constraints \eqref{eq:cp bi}-\eqref{eq:cp last} and such that $\mB \vx = \mA \vx$ for all $\vx \in \cU_1\times\dots\times\cU_m$. As a first step, in the next lemma we show that $\mB$ is always equivalent to another matrix $\mB' = [\mB'_1 \mid\dots\mid \mB'_m] \cong_{(\cU_1 \times\dots\times\cU_m)} \mB$ that satisfies $\mB'_i \vx_i \ge \vec 0$ for all $i = 1,\dots,m$ and $\vx_i \in \cU_i$.
            \smallskip
            \begin{lemma}\label{lem:cp pos}
                There exist $\mB'_1, \dots, \mB'_n$, such that
                $\mB \cong_{(\cU_1 \times\dots\times\cU_m)} \mB' \defeq \mleft[\mB'_1\ \big|\ \dots\ \big|\
                        \mB'_n\mright]$, and furthermore, for all $i=1,\dots,m$,
                $\vec{B}_i'\vx_i \ge \vec 0$ for all $\vx_i \in \cU_i$.
            \end{lemma}
            \begin{proof}[Proof of \Cref{lem:cp pos}]
                Since $\vec 0 \notin \aff \cU_i$ for all $i = 1,\dots, m$ by hypothesis, then there exist vectors $\vec\tau_i$ such that $\vec{\tau}_i^\top \vx_i = 1$ for all $\vx_i\in \cU_i$.
                For any $k \in \{1,\dots,d\}$ and $i\in\{1,\dots,m-1\}$, let
                \[
                    \vec\beta_i[k] \defeq \min_{\vx\in \cU_i}\ (\mB_i \vx)[k]\quad\forall\ k\in \{1,\dots,d\}, \quad\qquad \mB'_i \defeq \mB_i - \vec\beta_i\vec{\tau}_i^\top
                \]
                Furthermore, let $
                    \vec\beta_m \defeq \sum_{i=1}^{m-1}\vec\beta_i\text{ and } \mB'_m \defeq \mB_i + \vec\beta_m\vec{\tau}_m^\top
                $.
                It is immediate to check that the matrix $\mB' \defeq [\mB'_1 \mid \cdots \mid \mB'_m]$
                is such that $\mB'\vx = \mB\vx$ for all $\vx\in \cU_1\times\dots\times\cU_m$, that is, $\mB'\cong_{(\cU_1 \times\dots\times\cU_m)} \mB$.
                We now show that $\mB'_i\vx_i \ge \vec{0}$ for all $i=1,\dots,m$ and $\vx_i\in \cU_i$.
                Expanding the definition of $\mB'_i$ and $\vec\beta_i$, for all $i\in\{1,\dots,m-1\}, \vx_i\in \cU_i$ and $k\in \{1,\dots,d\}$,
                \[
                    (\mB'_i \vx_i)[k] & = (\mB_i\vx_i)[k] - (\vec\beta_i)[k]\cdot( \vec{\tau}_i^\top \vx_i )
                    = (\mB_i\vx_i)[k] - \min_{\hat{\vx_i}\in \cU_i}(\mB_i \hat{\vx_i})[k] \ge 0.
                \]
                Hence, it only remains to prove that the same holds for $i=m$. To that end, fix any $k\in \{1,\dots,d\}$ and $\vx_m \in \cU_m$, and let
                $\vx^*_i \in \argmin_{\vx_i\in \cU_i}(\mB_i\vx_i)[k]$ for all $i\in\{1,\dots,m-1\}$. Using the fact that all
                vectors in $\cP(\vb)$ are nonnegative, $\vx^* \defeq (\vx^*_1,\dots,\vx^*_{m-1},\vx_m) \in \cU_1\times\dots\times \cU_m$ must satisfy $\mB\vx^* \ge \vec{0}$. Hence,
                \begin{align*}
                    0 \le (\mB\vx^*)[k] = (\mB_m \vx_m)[k] + \sum_{i=1}^{m-1} (\mB_i\vx^*_i)[k] = (\mB_m \vx_m)[k] + \sum_{i=1}^{m-1}\vec\beta_i[k] = (\mB'_m\vx_m)[k],
                \end{align*}
                thus concluding the proof of the lemma.
            \end{proof}
            Since $\mB' \cong_{(\cU_1 \times\dots\times\cU_m)} \mB$, and $\mB$ maps to $\cP(\vb)$, for all $\vx = (\vx_1,\dots,\vx_m) \in \cU_1\times\dots\times\cU_m$ we must have
            \[
                \vb = \mP (\mB' \vx) = \sum_{i=1}^m \mP \mB'_i \vx_i.
            \]
            Since we can pick the $\vx_i$ for different indices $i$ independently, it follows that $\mP \mB_i \vx_i$ must be a constant function of $\vx_i \in \cU_i$, that is, there must exist vectors $\vb_1, \dots, \vb_m \in \bbR^k$ such that
            \[
                \vb_1 + \dots + \vb_m = \vb, \qquad\text{and}\quad\mP \mB'_i \vx_i = \vb_i  \qquad\forall\, \vx_i\in\cU_i.
            \]
            Since in addition $\mB'_i \vx_i \ge \vec 0$ (by construction of the $\mB'_i$), this means that $\mB'_i \in \lintrans {\cU_i} {\cP(\vb_i)}$. Finally, using the hypothesis that $\lintrans {\cU_i} {\cP(\vb_i)} \cong_{\cU_i} \mtrans {\cU_i} {\cP(\vb_i)}$, there must exist $\mA_i \in \mtrans {\cU_i} {\cP(\vb_i)}$, with $\mA_i \cong_{\cU_i} \mB'_i$, for all $i = 1,\dots, m$. This concludes the proof. \qedhere
    \end{itemize}
\end{proof}

\subsection{Characterization of Linear Functions of Subtrees}

The following result can be understood as a version of \Cref{thm:charac} stated for each subtree, rooted at some decision node, of the decision space.

\begin{theorem}\label{thm:charac inductive}
    For any decision node $j \in \DecNodes$ and vector $\vb_j \in \bbR^k$, let
    
    \[
        \mats j {\vb_j} \defeq \mleft\{\underbrace{\big[\,\cdots\mid\mat{A}_{(ja)}\mid\cdots\big]}_{\in\,\bbR^{d \times \SeqsSub{j}}} :
        \begin{array}{lll}
            \constraint                 & \mat{P}\mat{A}_{(j'a')} = \vb_{j'}          & \forall\,j'a'\in\SeqsSub{j}\cap\Seqs^\bot      \\[.5mm]
            \constraint                 & \mat{A}_{(j'a')} = \vec{0}                  & \forall\,j'a'\in\SeqsSub{j}\setminus\Seqs^\bot \\[.5mm]
            \constraint                 & \sum_{j''\in\cC_{j'a'}}\vb_{j''} = \vb_{j'} & \forall\,j'a'\in\SeqsSub{j}\setminus\Seqs^\bot \\[.5mm]
            \constraint[eq:cons nonneg] & \mat{A}_{(j'a')} \ge \vec 0                 & \forall\,j'a'\in\SeqsSub{j}                    \\[1mm]
            \constraint                 & \vb_{j'} \in \bbR^k                         & \forall\,j'\succ j                             \\
        \end{array}
        \mright\}.
        \numberthis{eq:def Mtilde}
    \]
    Then, $\mats j {\vb_j} \cong_{\SeqfSub j} \lintrans {\SeqfSub{j}} {\cP(\vb_j)}$.
\end{theorem}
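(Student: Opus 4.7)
The plan is to establish \Cref{thm:charac inductive} by reverse induction on the height of the subtree of the tree-form decision process rooted at $j$, proving both directions of the equivalence $\mats{j}{\vb_j} \cong_{\SeqfSub{j}} \lintrans{\SeqfSub{j}}{\cP(\vb_j)}$ at each step. For the base case, when every action at $j$ is terminal, $\SeqsSub{j}\subseteq\Seqs^\bot$, the polytope $\SeqfSub{j}$ is the probability simplex spanned by the vertices $\{\vec{e}_{ja}\}_{a\in\Actions{j}}$, and the only non-vacuous constraints in $\mats{j}{\vb_j}$ force each column $\mA_{(ja)}$ to lie in $\cP(\vb_j)$. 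One inclusion then follows by convexity ($\mA \vx = \sum_a \vx[ja]\,\mA_{(ja)} \in \cP(\vb_j)$), and the converse by evaluating any $\mB \in \lintrans{\SeqfSub{j}}{\cP(\vb_j)}$ at each $\vec{e}_{ja}$. For the forward direction of the inductive step, given $\mA \in \mats{j}{\vb_j}$ and $\vx \in \SeqfSub{j}$, I would unroll $\vx$ via \Cref{lem:seqf structure} ($\vx[\succeq j'] = \vx[ja]\, \vx_{\succeq j'}$ at each non-terminal $ja$), and use the vanishing of non-terminal columns together with the telescoping $\vb$-summation constraints in the definition of $\mats{j}{\vb_j}$ and the inductive hypothesis applied to each child subtree to conclude $\mA\vx \in \cP(\vb_j)$ by direct computation.

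The substantive work lies in the converse direction of the inductive step. Given an arbitrary $\mB \in \lintrans{\SeqfSub{j}}{\cP(\vb_j)}$, I would first invoke \Cref{lem:zero} to replace $\mB$ by an equivalent matrix $\mB' \cong_{\SeqfSub{j}} \mB$ whose column $\mB'_{(ja)}$ vanishes at every non-terminal $ja \in \Actions{j}$. Evaluating $\mB'$ at the deterministic sequence-form strategy that plays a terminal action $a$ at $j$ then forces (via \Cref{lem:nonzero}) the columns $\mB'_{(ja)}$ at every terminal $ja$ to lie in $\cP(\vb_j)$. For each non-terminal action $a$ with children $\Children{ja} = \{j_1',\dots,j_m'\}$, restricting attention to the strategies that place all probability mass on $a$ at $j$ lets each sub-strategy $\vx_{\succeq j'_i} \in \SeqfSub{j'_i}$ range independently, so the block $\bigl[\mB'_{\succ j'_1} \bigm| \dots \bigm| \mB'_{\succ j'_m}\bigr]$ (the columns of $\mB'$ indexed by $\bigcup_i \SeqsSub{j'_i}$) lies in $\lintrans{\SeqfSub{j'_1} \times \dots \times \SeqfSub{j'_m}}{\cP(\vb_j)}$. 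Because $\sum_{b \in \Actions{j'}} \vx[j'b] = 1$ identically on $\SeqfSub{j'}$, the hypothesis $\vec{0} \notin \aff \SeqfSub{j'}$ of \Cref{prop:cp} is satisfied; combining \Cref{prop:cp} with the inductive hypothesis $\lintrans{\SeqfSub{j'}}{\cP(\vb_{j'})} \cong_{\SeqfSub{j'}} \mats{j'}{\vb_{j'}}$ for arbitrary right-hand side $\vb_{j'}$ produces vectors $\{\vb_{j'}\}_{j' \in \Children{ja}}$ with $\sum_{j' \in \Children{ja}} \vb_{j'} = \vb_j$ together with subtree matrices $\mA_{\succ j'} \in \mats{j'}{\vb_{j'}}$ equivalent on $\SeqfSub{j'}$ to the corresponding blocks of $\mB'$. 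Gluing together the zero columns at every non-terminal $ja$, the terminal columns coming from $\mB'$, and the inductively constructed $\mA_{\succ j'}$ yields the required matrix $\mA \in \mats{j}{\vb_j}$ with $\mA \cong_{\SeqfSub{j}} \mB$.

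The principal obstacle is the multi-level bookkeeping of the $\vb_{j'}$'s: \Cref{prop:cp} only supplies the one-step identity $\sum_{j'\in\Children{ja}} \vb_{j'} = \vb_j$ at the current level, while the deeper identities $\sum_{j''\in\Children{j'a'}} \vb_{j''} = \vb_{j'}$ appearing in the definition of $\mats{j}{\vb_j}$ come from the inductive hypothesis applied inside each child subtree. The key point is that the vectors $\vb_{j'}$ chosen by \Cref{prop:cp} at the current level are exactly the right-hand sides with which we invoke the inductive hypothesis on the children, so the two sources of constraints compose consistently. This is precisely where the clean modular separation between \Cref{lem:zero}, \Cref{lem:nonzero}, and \Cref{prop:cp} pays off and makes the induction close.
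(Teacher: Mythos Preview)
Your proposal is correct and follows essentially the same approach as the paper: structural induction on the tree, with the base case handled by the simplex structure, the forward direction via direct computation using \Cref{lem:seqf structure} and the telescoping $\vb$-constraints, and the converse direction by combining \Cref{lem:zero}, \Cref{lem:nonzero}, and \Cref{prop:cp} with the inductive hypothesis. Your closing remark about the bookkeeping of the $\vb_{j'}$ vectors is exactly what the paper handles via its inductive reformulation \eqref{eq:equiv}.
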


Before continuing with the proof, we remark a subtle point: unlike \eqref{eq:cons range}, which constraints each column to have entries in $[0,\gamma]$, \eqref{eq:cons nonneg} only specifies the lower bound at zero, but no upper bound. Hence the tilde above the symbol of this Theorem. Consequently, the matrices in the set $\mats j {\vb_j}$ need not have bounded entries. In that sense, \Cref{thm:charac inductive} is slightly different from \Cref{thm:charac}. We will strengthen \eqref{eq:cons nonneg} to enforce a bound on each column when completing the proof of \Cref{thm:charac} in the next subsection.

\begin{proof}\allowdisplaybreaks
    To aid us with the proof, we first express the definition of $\mats j {\vb_j}$ in a way that better captures the inductive structure we need. By direct inspection of the constraints, the set $\mats j {\vb_j}$ satisfies the inductive definition
    
    \[
        \mats j {\vb_j} = \mleft\{
        \begin{array}{l}
            \mA = \big[\,\cdots\mid\mat{A}_{(ja)}\mid\cdots\big] \in\,\bbR^{d \times \SeqsSub{j}},~~
            \text{such that:} \\[3mm]
            \begin{array}{lll}
                \constraint[eq:cons X]  & \mat{P}\mat{A}_{(ja)} = \vb_j                                  & \forall\,a\in\Actions{j} : ja \in \SeqsSub{j}\cap\Seqs^\bot \\[.5mm]
                \constraint[eq:cons Y]  & \mat{A}_{(ja)} = \vec{0}                                       & \forall\,a\in\Actions{j}: ja \notin \Seqs^\bot              \\[.5mm]
                \constraint[eq:cons Z]  & \sum_{j'\in\cC_{ja}}\vb_{j'} = \vb_j                           & \forall\,a\in\Actions{j} : ja \notin \Seqs^\bot             \\[.5mm]
                \constraint[eq:cons W]  & [\mA_{(\sigma)}]_{\sigma \succeq j'} \in \mats {j'} {\vb_{j'}} & \forall\, a \in \Actions j, j' \in \Children{ja}            \\[.5mm]
                \constraint[eq:cons WW] & \mat{A}_{(ja)} \ge \vec 0                                      & \forall\,a\in\Actions{j}                                    \\[1mm]
                \constraint             & \vb_{j'} \in \bbR^k                                            & \forall\, a \in \Actions j, j' \in \Children{ja}            \\
            \end{array}
        \end{array}
        \mright\}.
        \numberthis{eq:equiv}
    \]
    We prove the result by structural induction on the tree-form decision process.
    \begin{itemize}
        \item \textbf{Base case.} We start by establishing the result for any terminal decision node $j \in \DecNodes$, that is, one for which all sequences $\{ja : a \in \Actions j\}$ are terminal. In this case, the set $\SeqfSub{j}$ is the probability simplex $\Delta(\{ja : a \in \Actions j\})$. Thus, for a matrix $\mA$ to map all $\vx \in \SeqfSub{j}$ to elements in the convex polytope $\cP(\vb_j)$ it is both necessary and sufficient that all columns of $\mA$ be elements of $\cP(\vb_j)$. It is necessary because if $\mA \vx \in \cP(\vb_j)$ for all $\vx \in \SeqfSub{j}$, then for the indicator vector $\vx$ with $\vx[ja] = 1$ we get $\mA \vx = \mA_{(ja)} \in \cP(\vb_j)$. And, it is sufficient because any $\vx \in \SeqfSub{j}$ represents a convex combination of the columns $\mA_{(ja)}$.
        
        The set defined by these constraints matches exactly the set $\mats j {\vb_j}$ defined in the statement: since all sequences $ja$ are terminal, in this case it reduces to
              \[
                  \mats j {\vb_j} = \mleft\{\big[\,\cdots\mid\mat{A}_{(ja)}\mid\cdots\big] \in\bbR^{d \times \SeqsSub{j}} :
                  \begin{array}{lll}
                      \constraint & \mat{P}\mat{A}_{(ja)} = \vb_{j} & \forall\,a\in\Actions{j}  \\[.5mm]
                      \constraint & \mat{A}_{(ja)} \ge  \vec 0      & \forall\,a \in \Actions j \\[1mm]
                  \end{array}
                  \mright\},
              \]
              that is, the set of matrices whose columns are elements of $\cP(\vb_j)$. So, we have $\mats j {\vb_j} = \lintrans {\SeqfSub{j}} {\cP(\vb_j)}$ with equality, which immediately implies the claim $\mats j {\vb_j} \cong_{\SeqsSub j} \lintrans {\SeqfSub{j}} {\cP(\vb_j)}$.
        \item \textbf{Inductive step.} We now look at a general decision node $j \in \DecNodes$, assuming as inductive hypothesis that the claim holds for any $j' \succ j$. Below we prove that $\mats j {\vb_j} \cong_{\SeqfSub j} \lintrans {\SeqfSub{j}} {\cP(\vb_j)}$ as well.

              \medskip ($\subseteq$) We start by showing that for any $\vb_j \in \bbR^k$, $\vx \in \PureSub{j}$ and $\mA \in \mats j {\vb_j}$, we have $\mA\vx \in \cP(\vb_j)$. From \eqref{eq:cons nonneg} it is immediate that $\mA$ has nonnegative entries, and since any vector $\vx \in \PureSub{j}$ also has nonnegative entries, it follows that $\mA \vx \ge \vec 0$. Hence, it only remains to show that $\mP (\mA \vx) = \vb_j$. Using \Cref{lem:seqf structure}, for any $j' \in \sqcup_{a\in\Actions j}\Children{ja}$ there exists $\vx_{\succeq j'} \in \SeqfSub{j'}$ such that $\vx[\SeqsSub{j'}] = \vx[ja] \cdot \vx_{\succeq j'}$. Hence, we have
              \[
                  \mP (\mA \vx) &= \sum_{a \in \Actions j} \mP\mA_{(ja)} \vx[ja] + \sum_{\substack{a \in \Actions j\\ja \notin\Seqs^\bot}} \sum_{j' \in \Children{ja}} \mP\mA_{\succeq j'} (\vx[ja]\cdot \vx_{\succeq j'})\\
                  &=\sum_{\substack{a \in \Actions j\\ja \in \Seqs^\bot}} \vx[ja]\cdot\mP\mA_{(ja)}  + \sum_{\substack{a \in \Actions j\\ja \notin\Seqs^\bot}} \mleft(\vx[ja] \sum_{j' \in \Children{ja}} \mP\mA_{\succeq j'}\vx_{\succeq j'}\mright) & \text{(from \eqref{eq:cons Y})}\\
                  &=\sum_{\substack{a \in \Actions j\\ja \in \Seqs^\bot}} \vx[ja]\cdot\vb_{j} +\sum_{\substack{a \in \Actions j\\ja \notin \Seqs^\bot}} \mleft(\vx[ja] \sum_{j' \in \Children{ja}} \vb_{j'}\mright) & \text{(from \eqref{eq:cons X} and \eqref{eq:cons W})}\\
                  &=\sum_{\substack{a \in \Actions j\\ja \in \Seqs^\bot}} \vx[ja]\cdot\vb_{j} +\sum_{\substack{a \in \Actions j\\ja \notin \Seqs^\bot}} \vx[ja]\cdot \vb_{j} & \text{(from \eqref{eq:cons Z})}\\
                  &= \vb_j \cdot \sum_{a \in \Actions j} \vx[ja] = \vb_j. & \text{(from \Cref{def:sf polytope})}
              \]

              ($\supseteq$) Conversely, consider any $\mB \in \lintrans {\SeqfSub{j}} {\cP(\vb_j)}$. We will show that there exists a matrix $\mA \in \mats j {\vb_j}$ such that $\mB \cong_{\SeqfSub j} \mA$. First, we argue that there exists a matrix $\mB' \cong_{\SeqfSub j} \mB$ with the property that the column $\mB'_{(ja)}$ corresponding to any \emph{nonterminal} sequence is identically zero.

              \begin{lemma}\label{lem:zero}
                  There exists $\mB' \cong_{\SeqfSub j} \mB$ such that $\mB'_{(ja)} = \vec 0$ for all $a \in \Actions j$ such that $ja \in \SeqsSub{j} \setminus \Seqs^\bot$.
              \end{lemma}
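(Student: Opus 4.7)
The plan is to define a single-step ``absorption'' operation that, applied at any non-terminal sequence $ja$, produces an equivalent matrix (with respect to $\cong_{\SeqfSub{j}}$) whose column indexed by $ja$ is the zero vector, and then to iterate this operation in a top-down fashion over all non-terminal sequences in the subtree rooted at $j$ until every such column vanishes.

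For the absorption step, fix any non-terminal sequence $ja \in \SeqsSub{j}\setminus\Seqs^\bot$ and pick an arbitrary child information set $j^* \in \Children{ja}$ (which exists precisely because $ja$ is non-terminal). By the mass-conservation constraint in the definition of $\SeqfSub{j}$ (\Cref{def:Qj}), we have
\[
\vx[ja] \;=\; \sum_{a' \in \Actions{j^*}} \vx[j^* a'] \qquad \forall\,\vx \in \SeqfSub{j}.
\]
Multiplying by the column $\mB_{(ja)}$ gives $\mB_{(ja)}\,\vx[ja] = \sum_{a' \in \Actions{j^*}} \mB_{(ja)}\,\vx[j^* a']$. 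Hence, if we define a new matrix $\tilde\mB$ by $\tilde\mB_{(ja)} = \vec{0}$, $\tilde\mB_{(j^* a')} = \mB_{(j^* a')} + \mB_{(ja)}$ for every $a' \in \Actions{j^*}$, and $\tilde\mB_{(\sigma)} = \mB_{(\sigma)}$ at every remaining column $\sigma$, then $\tilde\mB\vx = \mB\vx$ for all $\vx \in \SeqfSub{j}$, so $\tilde\mB \cong_{\SeqfSub{j}} \mB$.

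Next, I would iterate this absorption step over every non-terminal sequence in $\SeqsSub{j}\setminus\Seqs^\bot$, processing them in top-down (say BFS) order starting from $j$. With this order, whenever the column at $ja$ is zeroed, the only columns modified are those at sequences $j^* a'$ strictly deeper than $ja$; such sequences have not yet been processed, so the zero installed at $ja$ is preserved throughout the remainder of the procedure. Any non-zero content newly deposited at a non-terminal $j^* a'$ will simply be pushed further down once $j^* a'$ itself is processed, eventually terminating at columns indexed by terminal sequences. After all non-terminal sequences have been handled, the resulting matrix $\mB'$ satisfies $\mB'_{(ja)} = \vec{0}$ for every $ja \in \SeqsSub{j}\setminus\Seqs^\bot$ and $\mB' \cong_{\SeqfSub{j}} \mB$, as required.

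The main subtlety to watch out for is the traversal order. A bottom-up processing would re-populate previously zeroed columns, because the absorption step at an ancestor $ja$ deposits mass into descendants. The top-down order sidesteps this entirely, since absorption only ever pushes mass strictly downward in the tree and never touches any ancestor of the sequence being processed.
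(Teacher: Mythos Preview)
Your absorption step is exactly the paper's ``spreading'' operation: pick a child information set $j^*\in\Children{ja}$, use the mass-conservation identity $\vx[ja]=\sum_{a'\in\Actions{j^*}}\vx[j^*a']$, and redistribute the column $\mB_{(ja)}$ onto the columns $\mB_{(j^*a')}$. So the core idea matches the paper's proof.

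The one place where you diverge is scope. The lemma only asks that $\mB'_{(ja)}=\vec 0$ for the non-terminal sequences $ja$ with $a\in\Actions j$, i.e.\ only at the \emph{root} decision node $j$ of the subtree. You instead zero out \emph{every} non-terminal sequence in $\SeqsSub{j}\setminus\Seqs^\bot$, which is strictly more than required (and is why you are forced to discuss top-down traversal order). The paper simply iterates the spreading step once per action $a\in\Actions j$ with $ja$ non-terminal; since each such step only modifies columns at strictly deeper sequences $j^*a'$ (never another column $ja'$ with $a'\in\Actions j$), no ordering issue arises and the argument is shorter. Your stronger version is still correct and implies the lemma, but the extra machinery is not needed here---the deeper columns are handled separately via the inductive hypothesis and \Cref{prop:cp}.
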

              \begin{proof}
                  Fix any $a \in \Actions j$ such that $ja$ is nonterminal. Then, by definition there exists at least one decision node $j'$ whose parent sequence is $ja$. Consider now the matrix $\mB''$ obtained from ``spreading'' column $\mB_{(ja)}$ onto $\mB_{(j'a')}$ ($a' \in \Actions {j'}$), that is, the matrix whose columns are defined according to the following rules:
                  (i) $\mB''_{(ja)} = \vec 0$,
                  (ii) $\mB''_{(j'a')} = \mB_{(j'a')} + \mB_{(ja)}$ for all $a' \in \Actions {j'}$,
                  (iii) $\mB''_{(\sigma)} = \mB_{(\sigma)}$ everywhere else.
                  The column $\mB''_{(ja)}$ is identically zero by construction, and all other columns $\mB''_{(ja')}$, $a' \in \Actions j \setminus \{a\}$, are the same as $\mB$. Most importantly, since from the sequence-form constraints \Cref{def:sf polytope} any sequence-form strategy $\vx \in \SeqfSub{j}$ satisfies the equality
                  $\vx[ja] = \sum_{a' \in \Actions {j'}} \vx[j'a']$, the matrix $\mB''$ satisfies $\mB'' \vx = \mB \vx$ for all $\vx \in \SeqfSub{j}$, \emph{i.e.}, $\mB'' \cong_{\SeqfSub j} \mB$. Iterating the argument for all actions $a \in \Actions j$ yields the statement.
              \end{proof}

              Consider now any $a \in \Actions j$ that leads to a terminal sequence $ja \in \SeqsSub j \cap \Seqs^\bot$. The vector $\vec{1}_{ja}$ defined as having a $1$ in the position corresponding to $ja$ and $0$ everywhere else is a valid sequence-form strategy vector, that is, $\vec{1}_{ja} \in \SeqfSub j$. Hence, since $\mB'$ maps $\SeqfSub j$ to $\cP(\vb_j)$, it is necessary that $\mB'_{(ja)} \in \cP(\vb_j)$, that is, $\mB'_{(ja)} \ge \vec 0$ and $\mP \mB'_{(ja)} = \vb_j$. In other words, we have just proved the following.

              \begin{lemma}\label{lem:nonzero}
                  For any $a \in \Actions j$ such that $ja \in \SeqsSub j \cap \Seqs^\bot$, $\mB'_{(ja)} \ge \vec 0$ and $\mP \mB'_{(ja)} = \vb_j$.
              \end{lemma}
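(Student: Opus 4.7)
My plan is to evaluate $\mB'$ on the extreme point of $\SeqfSub{j}$ that isolates exactly the column $\mB'_{(ja)}$. Specifically, I would consider the indicator vector $\vec{1}_{ja} \in \bbR^{\SeqsSub{j}}$ whose $ja$-th coordinate equals $1$ and all other coordinates equal $0$. By linearity of $\mB'$ we have $\mB' \vec{1}_{ja} = \mB'_{(ja)}$, so if I can argue that (i) $\vec{1}_{ja} \in \SeqfSub{j}$ and (ii) $\mB'$ maps $\SeqfSub{j}$ into $\cP(\vb_j)$, then $\mB'_{(ja)} \in \cP(\vb_j)$, which unpacks directly to the two required conclusions.

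The first step is verifying $\vec{1}_{ja} \in \SeqfSub{j}$, and this is where the hypothesis $ja \in \Seqs^\bot$ plays its essential role. Using \Cref{def:Qj}, I must check that $\sum_{a' \in \Actions{j}} \vec{1}_{ja}[ja'] = 1$ (trivially true since $\vec{1}_{ja}[ja] = 1$ is the only nonzero entry at $j$) and that $\sum_{a'' \in \Actions{j'}} \vec{1}_{ja}[j'a''] = \vec{1}_{ja}[\parent{j'}]$ for every descendant $j' \succ j$. Since $ja$ is terminal, $\Children{ja} = \emptyset$, so no $j' \succ j$ can have $ja$ in its ancestor chain; every such $j'$ therefore lies in a subtree rooted at some $ja'$ with $a' \neq a$, where $\vec{1}_{ja}$ vanishes identically. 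Hence both sides of the conservation constraint are zero and the constraint holds.

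The second step is immediate: \Cref{lem:zero} gives $\mB' \cong_{\SeqfSub{j}} \mB$, and since $\mB \in \lintrans{\SeqfSub{j}}{\cP(\vb_j)}$, we obtain $\mB'\vec{1}_{ja} = \mB'_{(ja)} \in \cP(\vb_j)$. Unpacking the definition $\cP(\vb_j) = \{\vec{y}\in\bbR^d : \mP\vec{y} = \vb_j,\ \vec{y} \ge \vec{0}\}$ then yields $\mB'_{(ja)} \ge \vec{0}$ and $\mP\mB'_{(ja)} = \vb_j$. There is essentially no serious obstacle here: the lemma is a one-line consequence of plugging in a single well-chosen pure strategy, and the only point requiring mild care is the sequence-form verification, which terminality of $ja$ makes trivial.
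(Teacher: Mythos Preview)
Your proof is correct and follows essentially the same approach as the paper: evaluate $\mB'$ on the indicator vector $\vec{1}_{ja}$, note that terminality of $ja$ makes this a valid sequence-form strategy in $\SeqfSub{j}$, and conclude $\mB'_{(ja)} \in \cP(\vb_j)$ since $\mB' \cong_{\SeqfSub{j}} \mB$ maps $\SeqfSub{j}$ into $\cP(\vb_j)$. The paper's argument is stated more tersely (it asserts $\vec{1}_{ja} \in \SeqfSub{j}$ without spelling out the verification you give), but the substance is identical.
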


              Combined, \Cref{lem:zero,lem:nonzero} show that $\mB'$ satisfies constraints \eqref{eq:cons X}, \eqref{eq:cons Y}, and \eqref{eq:cons WW}.
              Consider now any action $a \in \Actions j$ that defines a \emph{nonterminal} sequence $ja \in \SeqsSub j \setminus \Seqs^\bot$. For each child decision point $j' \in \Children{ja}$, let $\vx_{\succeq j'} \in \SeqfSub{j'}$ be a choice of strategy for that decision point, and denote $\mB'_{\succeq j'}$ the submatrix of $\mB'$ obtained by only considering the columns $\mB'_{(\sigma)}$ corresponding to sequences $\sigma \succeq j'$. The vector $\vx$ defined according to $\vx[ja] = 1$, $\vx[\SeqsSub{j'}] = \vx_{\succeq j'}$ for all $j' \in \Children{ja}$, and $0$ everywhere else is a valid sequence-form strategy $\vx \in \SeqfSub{j}$, and therefore $\mB'\vx \in \cP(\vb_j)$ since $\mB' \cong_{\SeqfSub j} \mB$ and $\mB \in \lintrans {\SeqfSub j} {\cP(\vb_j)}$ by hypothesis. Therefore, using the fact that $\mB'_{(ja)} = \vec 0$ by \Cref{lem:zero}, we conclude that
              \[
                  \cP(\vb_j) \ni \mB' \vx = \sum_{j' \in \Children {ja}} \mB'_{\succeq j'} \vx_{\succeq j'}.
              \]

              Because the above holds for any choice of $\vx_{\succeq j'} \in \SeqfSub{j'}$, it follows that the matrix $[\,\cdots \mid \mB'_{\succeq j'} \mid \cdots] \in \lintrans {\bigtimes_{j' \in \Children {ja}} \SeqfSub{j'}} {\cP(\vb_j)}$. Hence, applying \Cref{prop:cp} (note that $\vec 0 \notin \aff\SeqfSub{j'}$ since $\sum_{a\in\Actions {j'}} \vx[j'a] = 1$ for all $\vx \in \SeqfSub{j'}$ by \Cref{def:Qj}) together with the inductive hypothesis, we conclude that for each $j' \in \Children{ja}$ there exist a vector $\vb_{j'}\in\bbR^{k}$ and a matrix  $\mA_{\succeq j'} \in \mtrans {\SeqfSub{j'}} {\cP(\vb_{j'})}$, such that $\sum_{j'\in\Children{ja}} \vb_{j'} = \vb_j$ and $[\,\cdots \mid \mB'_{\succeq j'} \mid \cdots] \cong_{\bigtimes_{j' \in \Children {ja}} \SeqfSub{j'}} [\,\cdots \mid \mA_{\succeq j'} \mid \cdots]$. We can therefore replace all columns corresponding to $\mB'_{\succeq j'}$ with those of $\mA_{\succeq j'}$, obtaining a new matrix $\cong_{\SeqfSub j} \mB'$. Repeating the argument for each $ja \in \SeqsSub j \setminus \Seqs^\bot$ finally yields a new matrix that is $\cong_{\SeqsSub j} \mB$ and satisfies all constraints given in \eqref{eq:equiv}, as we wanted to show. \qedhere
    \end{itemize}
\end{proof}

\subsection{Putting all the Pieces Together}

Finally, we are ready to prove the main result of the paper.

\thmcharac*
\begin{proof}[Proof of \Cref{thm:charac}]
    We prove the result in two steps. First, we show that
    \[
        \cL_{\cQ \to \cP} \cong_{\cQ}\tilde\cM_{\cQ \to \cP} \defeq \mleft\{\mat{A} = \big[\,\cdots \mid \mat{A}_{(\sigma)} \mid \cdots\,\big] \in\bbR^{d \times \Sigma} :
        \begin{array}{rll}
            \constraint                      & \mat{P}\mat{A}_{(ja)} = \vb_j                 & \forall\,ja\in\Sigma^\bot                    \\[.5mm]
            \constraint                      & \mat{A}_{(\sigma)} = \vec{0}                  & \forall\,\sigma\in\Sigma\setminus\Sigma^\bot \\[.5mm]
            \constraint                      & \sum_{j'\in\cC_{\emptyseq}}\vb_{j'} = \vec{p}                                                \\[.5mm]
            \constraint                      & \sum_{j'\in\cC_{ja}}\vb_{j'} = \vb_j          & \forall\,ja\in\Sigma\setminus\Sigma^\bot     \\[2mm]
            \constraint[eq:cons unbounded X] & \mat{A}_{(\sigma)} \ge \vec 0                 & \forall\,\sigma\in\Sigma                     \\[.5mm]
            \constraint                      & \vb_j \in \bbR^k                              & \forall\,j\in\cJ                             \\
        \end{array}
        \mright\},
    \]
    where the difference between $\tilde\cM_{\cQ \to \cP}$ and $\mtrans\cQ\cP$ lies in constraint \eqref{eq:cons unbounded X}, which only sets a lower bound (at zero) for each entry of the matrix, as opposed to a bound $[0, \gamma]$ as in \eqref{eq:cons range}. Using the definition of $\mats j {\vb_j}$ given in \eqref{eq:def Mtilde} (\Cref{thm:charac inductive}), the set $\tilde\cM_{\cQ \to \cP}$ can be equivalently written as
    \[
        \tilde\cM_{\cQ \to \cP} = \mleft\{\mat{A} = \big[\,\cdots \mid \mat{A}_{(\sigma)} \mid \cdots\,\big] \in\bbR^{d \times \Sigma} :
        \begin{array}{rll}
            \constraint[eq:zzz] & \mat{A}_{(\emptyseq)} = \vec{0}                                                                \\[.5mm]
            \constraint[eq:zyx] & \sum_{j\in\cC_{\emptyseq}}\vb_{j} = \vec{p}                                                    \\[.5mm]
            \constraint[eq:xyz] & [\mat{A}_{(\sigma)}]_{\sigma \succeq j} \in \mats j {\vb_j} & \forall\, j\in\Children\emptyseq \\[.5mm]
            \constraint         & \vb_j \in \bbR^k                                            & \forall\,j\in\Children\emptyseq  \\
        \end{array}
        \mright\}.
    \]

    To show that $\tilde\cM_{\cQ \to \cP} \cong_{\cQ} \lintrans \cQ \cP$, we proceed exactly like in the inductive step of the proof of \Cref{thm:charac inductive}. Specifically, let $\mA \in \tilde\cM_{\cQ \to \cP}$ and $\vx \in \Seqf$ be arbitrary. From \Cref{def:sf polytope} it follows that $\vx[\SeqsSub{j}] \in \SeqfSub{j}$ for any $j \in \Children\emptyseq$ and therefore, denoting $\mA_{\succeq j} \defeq [\mA_{(\sigma)}]_{\sigma \succeq j}$,
    \[
        \mP(\mA \vx) \equalref{eq:zzz} \sum_{j \in \Children \emptyseq} \mP \mA_{(\sigma)}\vx[\SeqsSub{j}] \equalref{eq:xyz} \sum_{j \in \Children\emptyseq} \vb_{j} \equalref{eq:xyz} \vp, \qquad \mA \vx \equalref{eq:zzz} \sum_{j \in \Children \emptyseq} \mA_{(\sigma)}\vx[\SeqsSub{j}] \overset{\eqref{eq:xyz}}{\ge}\vec{0}.
    \]
    which shows that $\mA\vx \in \cP$. Since $\mA$ and $\vx$ were arbitrary, it follows that $\tilde\cM_{\cQ \to \cP} \subseteq \lintrans \cQ \cP$. Conversely, let $\mB \in \lintrans \cQ \cP$ be arbitrary, and fix a root decision node $j \in \Children \emptyseq$. Then, we can ``spread out'' the column $\mB_{(\emptyseq)}$ by adding it to each $\mB_{(ja)} : a \in \Actions j$ by constructing the matrix $\lintrans \cQ \cP \ni \mB' \cong_\cQ \mB$ defined by (i) $\mB'_{(\emptyseq)} = \vec 0$, (ii) $\mB'_{(ja)} = \mB_{(ja)} + \mB_{(\emptyseq)}$ for any $a \in \Actions j$, and (iii) $\mB'_{(\sigma)} = \mB_{(\sigma)}$ everywhere else. Pick now any vectors $\{\vx_{\succeq j} \in \SeqfSub j : j \in \Children \emptyseq\}$, and consider the vector $\vx$ defined as $\vx[\emptyseq] = 1$, and $\vx[\SeqsSub j] = \vx_{\succeq j}$ for all $j\in \Children\emptyseq$. The vector $\vx$ is a valid sequence-form strategy, that is, $\vx \in \Seqf$. Let now $\mB'_{\succeq j} \defeq [\mB'_{(\sigma)}]_{\sigma \succeq j}$. From the fact that $\mB' \vx \in \cP$, together with the fact that by construction $\mB'_{(\emptyseq)} = \vec 0$, we conclude that
    \[
        \mB' \vx = \sum_{j \in \Children\emptyseq} \mB'_{\succeq j} \vx_{\succeq j} \in \cP.
    \]
    Since the inclusion above holds for any choice of $\{\vx_{\succeq j} \in \SeqfSub j : j \in \Children \emptyseq\}$, and since for all $j\in\Children\emptyseq$ the vector $\vec 0 \notin \aff \SeqfSub j$ (indeed, $\sum_{a \in \Actions j} \vx[ja] = 1$ for all $\vx \in \SeqsSub j$ by \Cref{def:Qj}), from \Cref{prop:cp} together with \Cref{thm:charac inductive} we conclude that for each $j \in \Children\emptyseq$ there exists a vector $\vb_j \in \bbR^k$ and a matrix $\mA_{\succeq j} \in \mats j {\vb_j}$, such that $\sum_{j \in \Children\emptyseq} \vb_j = \vp$ and $\mA_{\succeq j} \cong_{\SeqfSub j} \mB'_{\succeq j}$. By replacing the submatrices $\mB'_{\succeq j}$ with $\mA_{\succeq j}$ in $\mB'$ we then obtain an equivalent matrix that satisfies all constraints that define $\tilde\cM_{\cQ \to \cP}$. In summary, we have $\tilde\cM_{\cQ \to \cP} \cong
        \lintrans \cQ \cP$.

    To conclude the proof, we now show that $\tilde\cM_{\cQ \to \cP} = \mtrans\cQ\cP$. First, we make the straightforward observation that any $\mA \in \mtrans\cQ\cP$ also belongs to $\tilde\cM_{\cQ \to \cP}$, as the constraint that define the latter set are only looser. Hence, we only need to show that any $\mB \in \tilde\cM_{\cQ \to \cP}$ also satisfies constraint \eqref{eq:cons range}. Since $\mB \in \tilde\cM_{\cQ \to \cP}$, all columns of $\mB$ are nonnegative (constraint \eqref{eq:cons unbounded X}). Furthermore, since $\tilde\cM_{\cQ \to \cP} \cong_{\cQ} \lintrans \cQ \cP$, clearly $\mB\vx \in \cP$ for all $\vx \in \cQ$. Fix now any sequence $\sigma \in \Seqs$, and consider any strategy $\vx \in \Seqf$ that puts proability mass $1$ on all the actions on the path from the root to $\sigma$ included, that is, any $\vx \in \Seqf$ with $\vx[\sigma] = 1$. Then, from the nonnegativity of the columns of $\mB$, it follows that
    \[
        \cP \ni \mB \vx \ge \mB_{(\sigma)}.
    \]
    Since by definition of $\gamma$ any point in $\cP$ belongs to $[0,\gamma]^d$, we then conclude that $\mB_{(\sigma)} \in [0,\gamma]^d$, implying that $\mB \in \mtrans\cQ\cP$ as we wanted to show.
\end{proof}

\section{Corollaries of the Characterization Theorem}
\label{app:corollaries}

We mention two direct corollaries of \cref{thm:charac} that slightly extend the scope of the characterization.
The first corollary asserts that the polytope $\cM_{\cQ \to \cP}$ characterizes not only all \emph{linear} functions from $\cQ$ to $\cP$, but also all \emph{affine} functions.

\begin{corollary}[From linear to affine functions]\label{cor:affine}
    Let $\cQ$ be a sequence-form strategy space and let $\cP$ be any polytope.
    Then, for any \emph{affine} function $g : \cQ \to \cP$, there exists a matrix $\mat{A}$ in the polytope $\cM_{\cQ\to\cP}$ defined in~\cref{thm:charac}
    such that $g(\vx) = \mat{A}\vx$ for all $\vx\in\cQ$. Conversely, any $\mat{A}\in\cM_{\cQ\to\cP}$ induces an affine function from $\cQ$ to $\cP$.
\end{corollary}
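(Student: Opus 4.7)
The plan is to reduce the affine case to the linear case by exploiting the normalization constraint \eqref{eq:root mass}, namely $\vx[\emptyseq] = 1$, satisfied by every sequence-form strategy $\vx \in \cQ$. Because the empty-sequence coordinate is identically $1$ on all of $\cQ$, any constant offset can be folded into the column of a matrix indexed by $\emptyseq$, turning an affine map into an equivalent linear map on $\cQ$.

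Concretely, given any affine function $g : \cQ \to \cP$, I would write it in the form $g(\vx) = \mB\vx + \vc$ for some matrix $\mB \in \bbR^{d \times \Seqs}$ and offset $\vc \in \bbR^d$. I would then define a new matrix $\tilde\mB \in \bbR^{d \times \Seqs}$ that agrees with $\mB$ on every non-empty-sequence column but with its empty-sequence column modified to absorb the offset: $\tilde\mB_{(\emptyseq)} \defeq \mB_{(\emptyseq)} + \vc$ and $\tilde\mB_{(\sigma)} \defeq \mB_{(\sigma)}$ for all $\sigma \in \Seqs \setminus \{\emptyseq\}$. For every $\vx \in \cQ$, constraint \eqref{eq:root mass} gives $\vx[\emptyseq] = 1$, and hence
\[
    \tilde\mB\,\vx \;=\; \mB\vx + \vc\cdot\vx[\emptyseq] \;=\; \mB\vx + \vc \;=\; g(\vx) \;\in\; \cP.
\]
Thus $\vx \mapsto \tilde\mB\vx$ is a bona fide linear map from $\cQ$ to $\cP$, so \Cref{thm:charac} applies and produces a matrix $\mA \in \cM_{\cQ \to \cP}$ with $\mA\vx = \tilde\mB\vx = g(\vx)$ for all $\vx \in \cQ$, which is the desired conclusion.

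The converse direction is immediate: every linear function is affine, and by \Cref{thm:charac} any $\mA \in \cM_{\cQ\to\cP}$ already induces a linear (hence affine) map from $\cQ$ into $\cP$. There is no serious obstacle in this argument; the only thing to verify carefully is that the reduction preserves the domain/codomain relationship, which is guaranteed precisely by the mass-conservation constraint at the root. In that sense, \Cref{cor:affine} is essentially a repackaging of \Cref{thm:charac}, leveraging the affine structure of $\cQ$ (i.e., $\vec 0 \notin \aff \cQ$) that was already central to \Cref{prop:cp}.
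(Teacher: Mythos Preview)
Your proof is correct and follows essentially the same approach as the paper: both exploit the normalization constraint $\vx[\emptyseq]=1$ to absorb the affine offset into the empty-sequence column, thereby reducing to the linear case handled by \Cref{thm:charac}, and both dispatch the converse by noting that linear maps are affine.
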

\begin{proof}
    The second part of the statement is trivial since any linear function is also affine, and any $\mat{A}\in\cM_{\cQ\to\cP}$ induces a linear function from $\cQ$ to $\cP$.
    Let $g(\vx) = f(\vx) + \vec{b}$ be any affine function, where $f$ is an appropriate linear function from $\cQ$ to $\cP$ and $\vec{b} \in \bbR^n$.
    Since $\vec q[\emptyseq] = 1$ for all $\vec q \in \cQ$, the function $g$ coincides on $\cQ$ with the function
    $
        \tilde g : \cQ \ni \vx \mapsto f(\vx) + \vec{b}\cdot\vx[\emptyseq],
    $
    which is a linear function of $\vec x$. Hence, from the first part of \cref{thm:charac} there exists $\mat{A}\in\cM_{\cQ\to\cP}$ such that
    $\mat{A}\vx = \tilde g(\vx) = g(\vx)$ for all $\vx\in\cQ$.
\end{proof}

The second corollary of \cref{thm:charac} extends the characterization to the alternative definition of polytope as a bounded set of the form $\cC \defeq \{\vec{y} \in \bbR^n: \mat{C}\vec{y} \le \vec{c}\}$, by first introducing slack variables and rewriting the polytope in the form handled by \cref{thm:charac}.

\begin{corollary}[Alternative polytope representations]
    Let $Q$ be a sequence-form strategy polytope, and $\cC \defeq \{\vec{y}\in\bbR^n : \mat{C}\vec{y}\le\vec{c}\} \subseteq [-\gamma,\gamma]^n$ be a bounded polytope, where $\mat{C}\in\bbR^{m\times n}$. Let $k \defeq \max\{ \|\mat C\|_\infty, \|\vec c\|_\infty \}$ and introduce the polytope
    \[
        \tilde{\cP} \defeq \mleft\{ (\tilde{\vec{y}}, \vec{s}) \in \bbR^n \times \bbR^m : \bigg[\ \ \mat{C} \ \ \Big|\ \ \ kn \mat{I}\ \ \bigg]\begin{bmatrix} \tilde{\vec{y}} \\ \vec s\end{bmatrix} = \vec c + \gamma \mat{C} \vone, \quad\begin{bmatrix} \tilde{\vec{y}} \\ \vec s\end{bmatrix} \ge \vzero \mright\} \subseteq [0, 2\gamma]^{n+m},
    \]
    which is of the form handled by \cref{thm:charac}.
    For any affine function $g:\cQ\to\cC$, there exists a matrix $\mA$ in the polytope
    \[
        \tilde{\cM}_{\cQ \to \cC} \defeq \mleft\{
        \big[\tilde{\mat{M}}_{(\emptyseq)} - \gamma \vone \mid \cdots \mid \tilde{\mat{M}}_{(\sigma)} \mid \cdots\,\big] \in\bbR^{n \times \Sigma} : \begin{bmatrix}\tilde{\mat{M}} \\ \tilde{\mat{Z}}\end{bmatrix} \in \cM_{\cQ \to \tilde{\cP}},~~ \tilde{\mat{M}}\in\bbR^{n\times\Sigma},~~ \tilde{\mat{Z}}\in\bbR^{m\times \Sigma}
        \mright\}
    \]
    such that $g(\vx) = \mat{A}\vx$ for all $\vx\in\cQ$. 
    Conversely, any $\mat{A}\in\tilde{\cM}_{\cQ \to \cC}$ induces an affine function from $\cQ$ to $\cC$.
\end{corollary}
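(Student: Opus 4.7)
The plan is to reduce the statement to the already-proven \Cref{thm:charac} (in its affine form, \Cref{cor:affine}) by the classical translate-and-slack trick that rewrites the inequality description of $\cC$ as a system of equalities and nonnegativities matching the hypotheses of \Cref{thm:charac}. Concretely, for $\vec y \in \cC$ set $\tilde{\vec y} \defeq \vec y + \gamma \vone$ and $\vec s \defeq (\vec c - \mat C \vec y)/(kn) = (\vec c + \gamma \mat C \vone - \mat C \tilde{\vec y})/(kn)$. Direct substitution yields $\mat C \tilde{\vec y} + kn \vec s = \vec c + \gamma \mat C \vone$ and $\tilde{\vec y}, \vec s \ge \vzero$, and the entrywise bounds $\|\mat C\|_\infty, \|\vec c\|_\infty \le k$ together with $\vec y \in [-\gamma, \gamma]^n$ keep both blocks coordinate-wise in $[0, 2\gamma]$. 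The converse direction is the same algebra run backwards: for $(\tilde{\vec y}, \vec s) \in \tilde\cP$, the vector $\vec y \defeq \tilde{\vec y} - \gamma \vone$ satisfies $\mat C \vec y = \vec c - kn \vec s \le \vec c$, so $\vec y \in \cC$. This identifies $\tilde\cP$ as a polytope of the form required by \Cref{thm:charac}, with an affine surjection onto $\cC$ obtained by shifting back the top $n$ coordinates.

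Given an affine $g : \cQ \to \cC$, I would lift it to $\tilde g : \cQ \to \tilde\cP$ by composing with the affine map $\vec y \mapsto (\vec y + \gamma \vone, (\vec c - \mat C \vec y)/(kn))$. By \Cref{cor:affine} applied to $\tilde\cP$, there is a matrix $\begin{bmatrix}\tilde{\mat M} \\ \tilde{\mat Z}\end{bmatrix} \in \cM_{\cQ \to \tilde\cP}$ whose action on any $\vx \in \cQ$ equals $\tilde g(\vx)$; in particular, the top block gives $\tilde{\mat M} \vx = g(\vx) + \gamma \vone$. Using the sequence-form invariant $\vx[\emptyseq] = 1$, the constant translate $\gamma \vone$ is absorbed into the empty-sequence column: defining $\mA$ by $\mA_{(\emptyseq)} \defeq \tilde{\mat M}_{(\emptyseq)} - \gamma \vone$ and $\mA_{(\sigma)} \defeq \tilde{\mat M}_{(\sigma)}$ for $\sigma \ne \emptyseq$ yields $\mA \in \tilde\cM_{\cQ \to \cC}$ with $\mA \vx = \tilde{\mat M} \vx - \gamma \vone = g(\vx)$ for all $\vx \in \cQ$.

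Conversely, given $\mA \in \tilde\cM_{\cQ \to \cC}$, fix a witnessing $\begin{bmatrix}\tilde{\mat M} \\ \tilde{\mat Z}\end{bmatrix} \in \cM_{\cQ \to \tilde\cP}$ from the definition. The same $\vx[\emptyseq] = 1$ bookkeeping gives $\mA \vx = \tilde{\mat M} \vx - \gamma \vone$ for all $\vx \in \cQ$. Since $(\tilde{\mat M} \vx, \tilde{\mat Z} \vx) \in \tilde\cP$, the defining equality $\mat C \tilde{\mat M} \vx + kn \tilde{\mat Z} \vx = \vec c + \gamma \mat C \vone$ combined with $\tilde{\mat Z} \vx \ge \vzero$ rearranges to $\mat C(\tilde{\mat M} \vx - \gamma \vone) \le \vec c$, i.e., $\mA \vx \in \cC$. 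Hence $\vx \mapsto \mA \vx$ is an affine map from $\cQ$ into $\cC$.

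I do not foresee a real obstacle: the whole proof is bookkeeping around the observation that the always-one coordinate $\vx[\emptyseq]$ of sequence-form strategies lets any affine shift be absorbed into a single column of a linear matrix, so the content reduces to verifying that $\tilde\cP$ meets the hypotheses of \Cref{thm:charac}. The only mildly delicate step is confirming the coordinate-wise bound $\tilde\cP \subseteq [0, 2\gamma]^{n+m}$ for the slack block, which relies on the normalization by $kn$ chosen precisely so that the coarse bounds on $\mat C$, $\vec c$, and the shifted variable $\tilde{\vec y}$ push everything into $[0, 2\gamma]$.
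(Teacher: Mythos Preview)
Your proposal is correct and follows essentially the same route as the paper: both verify the coordinate bounds on $\tilde\cP$ via the translate-and-slack substitution $\tilde{\vec y} = \vec y + \gamma\vone$, lift $g$ to an affine map into $\tilde\cP$ and invoke \Cref{cor:affine}, and then absorb the $\gamma\vone$ shift into the empty-sequence column using $\vx[\emptyseq]=1$; the converse is likewise identical. Your write-up is a bit more explicit about the slack component and the inequality check $\mat C(\tilde{\mat M}\vx - \gamma\vone)\le \vec c$ in the converse, but the argument is the same.
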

\begin{proof}
    We begin by proving that $\cC \subseteq [-\gamma, \gamma]^n$ implies $\tilde{\cP} \subseteq [0, 2 \gamma]^{n+m}$. Consider any $(\tilde{\vec{y}}, \vec{s}) \in \tilde{\cP}$ and set $\vec{y} = \tilde{\vec{y}} - \gamma \vone$. Then, this is a valid $\vec{y} \in \cC \subseteq [-\gamma, \gamma]^n$ and, consequently, $\tilde{\vec{y}} \in [0, 2\gamma]^n$. For the slack variables $\vec{s}$ it holds that $kn \vec{s} = \vec{c} - \mat{C} \vec{y}$, where $\vec{y} = \tilde{\vec{y}} - \gamma \vone$ from before. By definition of $k$ and by $\vec{y} \geq -\gamma \vone$, we conclude that $\vec{c} - \mat{C} \vec{y} \leq (k + n \gamma k) \vone \implies \vec{s} \in [0, 2 \gamma]^m$.

    Now let $g:\cQ\to\cC$ be any affine function from $\cQ$ to $\cC$. Then we can define an affine function $f:\cQ\to\tilde{\cP}$ such that
    \[
        f(\vx) = \begin{bmatrix}
            g(\vx) + \gamma \vone \\
            \vec{s}
        \end{bmatrix}
    \]
    for all $\vx \in \cQ$. By \cref{cor:affine} we know that $\cM_{\cQ \to \tilde{\cP}}$ characterizes all affine functions from $\cQ$ to $\tilde{\cP}$, including the previous function $f$. Thus, there exists an $\tilde{\mat{M}} \in \bbR^{n \times \Seqs}$ such that $g(\vx) + \gamma \vone = \tilde{\mat{M}} \vx$ for all $\vx \in \cQ$. Since $\vx[\emptyseq] = 1$ for all $\vx \in \cQ$, we conclude that there exists $\mA \in \tilde{\cM}_{\cQ \to \cC}$ such that $g(\vx) = \mA \vx = \tilde{\mat{M}} \vx - \gamma \vone$ for all $\vx \in \cQ$.

    Conversely, consider any $\mA \in \tilde{\cM}_{\cQ \to \cC}$ and define $g(\vx) = \mA \vx$. That is, there exist suitable $\tilde{\mat{M}}\in\bbR^{n\times\Sigma}, \tilde{\mat{Z}}\in\bbR^{m\times \Sigma}$ that satisfy the constraints of polytope $\tilde{\cM}_{\cQ \to \cC}$. Then for all $\vx \in \cQ$ it holds $g(\vx) = \tilde{\mat{M}} \vx - \gamma \vone$, and $(\tilde{\vec{y}}, \vec{s}) \in \tilde{\cP}$ where $\tilde{\vec{y}} = \tilde{\mat{M}} \vx$ and $\vec{s} = \tilde{\mat{Z}} \vx$. Thus, by construction of $\tilde{\cP}$, as we also argued in the beginning of the proof, we conclude that $g(\vx) = \tilde{\vec{y}} - \gamma \vone = \vec{y} \in \cC$.
\end{proof}

\section{Additional proofs}

\begin{theorem}
    \label{thm:formal}
    Let $\Sigma$ denote the set of sequences of the learning player in the extensive-form game, and let $\eta\^t = 1/\sqrt{t}$ for all $t$. Then, for any sequence of loss vectors $\vell\^t \in [0, 1]^{\Seqs}$, \cref{algo:ours} guarantees linear-swap regret $O(|\Seqs|^2 \sqrt{T})$ after any number $T$ of iterations, and runs in $O(|\Seqs|^{10} \log(|\Seqs|) \log^2{t})$ time for each iteration $t$.
\end{theorem}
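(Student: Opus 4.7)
The plan is to invoke the reduction of \citet{Gordon08:No} recalled in \cref{thm:phi_regret_framework}, instantiated with the external-regret minimizer $\cR$ being online projected gradient descent (OGD) on the transformation polytope $\mtrans\cQ\cQ$. Under this reduction, the linear-swap regret of \cref{algo:ours} equals the external regret incurred by OGD when fed the induced linear losses $\mat{L}\^t = \vell\^t(\vx\^t)^\top$. Hence the proof splits into two independent parts: (a) bounding the external regret of OGD on $\mtrans\cQ\cQ$, and (b) bounding the per-iteration running time.

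For part (a), I would invoke the standard OGD regret guarantee: on a convex set of Frobenius diameter $D$ against linear losses of Frobenius norm at most $G$, with step sizes $\eta\^t = 1/\sqrt{t}$, OGD incurs external regret $O(DG\sqrt{T})$. Specializing \cref{thm:charac} to $\cP=\cQ$ (so that $\gamma=1$) yields that every matrix in $\mtrans\cQ\cQ$ has entries in $[0,1]$; hence any two such matrices differ by at most $|\Seqs|$ in Frobenius norm, giving $D=O(|\Seqs|)$. Since $\vell\^t,\vx\^t\in[0,1]^{\Seqs}$, the induced losses satisfy $\|\mat{L}\^t\|_F \le \|\vell\^t\|_2\,\|\vx\^t\|_2 \le |\Seqs|$, so $G=O(|\Seqs|)$. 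Combining the two bounds yields the desired $O(|\Seqs|^2\sqrt{T})$ linear-swap regret.

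For part (b), the two nontrivial operations per iteration are the projection in \eqref{algo:line:proj} and the fixed-point computation in \eqref{algo:line:fixed}. By \cref{thm:charac}, $\mtrans\cQ\cQ$ admits a description by $O(|\Seqs|^2)$ linear constraints over $O(|\Seqs|^2)$ variables. The projection thus reduces to a convex quadratic program over a polynomially-described polytope, and the fixed point reduces to the linear feasibility program $\{\mA\vx=\vx,\;\vx\in\cQ\}$; both can be solved in polynomial time via the ellipsoid method and standard linear programming, respectively.

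The main obstacle is pinning down the explicit $O(|\Seqs|^{10}\log|\Seqs|\log^2 t)$ exponent, which requires a careful accounting of the bit complexity of the iterates. The step sizes $\eta\^t=1/\sqrt{t}$ (or suitable rational approximations) carry bit complexity $O(\log t)$, while the constraints defining $\mtrans\cQ\cQ$ have bit complexity $O(\log|\Seqs|)$. A short induction, combined with the classical fact that projecting onto a rational polytope keeps the output's bit complexity polynomial in the input's bit complexity and in the polytope's description, shows that $\mA\^t$ has bit complexity $O(\log t + \log|\Seqs|)$ at every iteration. Substituting $n=O(|\Seqs|^2)$ variables, $m=O(|\Seqs|^2)$ constraints, and $L=O(\log t+\log|\Seqs|)$ into the standard running-time bound of the ellipsoid method for convex quadratic programs then yields the claimed per-iteration complexity.
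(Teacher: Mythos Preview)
Your regret analysis in part (a) matches the paper exactly: same reduction via \cref{thm:phi_regret_framework}, same diameter and gradient-norm bounds $D=G=|\Seqs|$ drawn from \cref{thm:charac}, same $O(|\Seqs|^2\sqrt{T})$ conclusion.

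Part (b), however, has a genuine gap. Your plan is to maintain exact rational arithmetic and argue by induction that the bit complexity of $\mA\^t$ stays $O(\log t + \log|\Seqs|)$. But the fact you invoke---that Euclidean projection onto a rational polytope produces an output whose bit complexity is \emph{polynomial} in the input's---gives a multiplicative blowup per step, not an additive one. Iterating a polynomial map $B_{t+1} = \poly(|\Seqs|, B_t)$ over $t$ rounds can drive the bit size to $\poly(|\Seqs|)^t$, which is exponential in $t$; nothing you have written rules this out, and the $O(\log t)$ bound you claim does not follow. Without controlling this growth you cannot substantiate the stated $O(|\Seqs|^{10}\log|\Seqs|\log^2 t)$ per-iteration cost.

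The paper takes a different route that sidesteps the bit-complexity issue entirely. It acknowledges that the projection can only be computed approximately, proves a short lemma showing that if $\mA\^{t+1}$ is within $\epsilon\^t$ (in squared Frobenius distance) of the true projection then the per-step regret inequality picks up only an additive $\frac{D}{\eta\^t}\epsilon\^t$ term, and then chooses $\epsilon\^t = t^{-5/2}$ so that the cumulative error is $O(|\Seqs|)$, which is absorbed into the $O(|\Seqs|^2\sqrt{T})$ bound. The $\log^2 t$ factor in the running time then arises from the $\log(1/\epsilon\^t)=O(\log t)$ accuracy required of the ellipsoid method at each step, not from tracking the bit size of the iterates. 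This is the missing ingredient in your argument: you need to either handle the approximate projection explicitly (as the paper does) or provide a correct amortized bit-complexity analysis, and the latter does not follow from the simple induction you sketched.
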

\begin{proof}[Proof of \Cref{thm:algo}]
    First we focus on the linear-swap regret bound. Based on \cite{Gordon08:No} the $\Phi$-regret equals external regret over the set $\Phi$ of transformations. In our case $\Phi$ is the set $\mtrans\cQ\cQ$ of all valid linear transformations and the losses for the external regret minimizer are functions $\mA \mapsto \langle \vell\^1, \mA \vx\^1 \rangle, \mA \mapsto \langle \vell\^2, \mA \vx\^2 \rangle, \ldots$ Equivalently, we can write these as $\mA \mapsto \langle \mat{L}\^t, \mA \rangle_F$, where $\langle \cdot, \cdot \rangle_F$ is the component-wise inner product for matrices and $\mat{L}\^t = \vell\^t (\vx\^t)^\top$, which is a rank-one matrix. Let $D$ be an upper bound on the diameter of $\mtrans\cQ\cQ$, and $L$ be such that $\|\mat{L}\^t\|_F \leq L$ for all $t$. Then, based on \cite{orabona2022:modern} we can bound the external regret for this instance of Online Linear Optimization by picking $\eta\^t = \frac{D}{L \sqrt{t}}$ which gives a regret of $O(D L \sqrt{T})$. Since $\mA \in [0, 1]^{\Seqs \times \Seqs}$ we get $D = |\Seqs|$, and since $\vell\^t, \vx\^t \in [0, 1]^{\Seqs}$ we get $L = |\Seqs|$. This results in the desired linear-swap regret of $O(|\Seqs|^2 \sqrt{T})$.

    However, in our previous analysis we assumed that it is possible to compute an exact solution $\mA\^{t+1} = \Pi_{\mtrans\cQ\cQ}(\mA\^t - \eta\^t \mat{L}\^t)$ of the projection step, which is an instance of convex quadratic programming, meaning that we can only get an approximate solution \citep{Vishnoi2021:Algorithms}. In the following Lemma we prove that an $\epsilon$-approximate projection does not affect the regret for a single iteration of the algorithm, if $\epsilon$ is sufficiently small. The inequality we prove is similar to the one in Lemma 2.12 from \cite{orabona2022:modern}.

    \begin{lemma}
        Let $\mat{Y}^* = \Pi_{\mtrans\cQ\cQ}(\mA\^t - \eta\^t \mat{L}\^t)$ and suppose that $\mA\^{t+1} \in \mtrans\cQ\cQ$ is such that $\| \mA\^{t+1} - \mat{Y}^* \|_F^2 \leq \epsilon\^t$, then for any $\mat{X} \in \mtrans\cQ\cQ$ it holds
        \[
            \langle \mat{L}\^t, \mat{X} - \mA\^t \rangle_F \leq \frac{1}{2 \eta\^t} \| \mA\^t - \mat{X} \|_F^2 - \frac{1}{2 \eta\^t} \| \mA\^{t+1} - \mat{X} \|_F^2 + \frac{\eta\^t}{2} \| \mat{L}\^t \|_F^2 + \frac{D}{\eta\^t} \frac{1}{\epsilon\^t}
        \]
    \end{lemma}
    \begin{proof} From Lemma 2.12 of \cite{orabona2022:modern} we know that
        \[
            \langle \mat{L}\^t, \mat{X} - \mA\^t \rangle_F \leq \frac{1}{2 \eta\^t} \| \mA\^t - \mat{X} \|_F^2 - \frac{1}{2 \eta\^t} \| \mat{Y}^* - \mat{X} \|_F^2 + \frac{\eta\^t}{2} \| \mat{L}\^t \|_F^2.
        \]

        Additionally, it holds that
    
        \[
            \| \mA\^{t+1} - \mat{X} \|_F^2 &= \| \mA\^{t+1} - \mat{Y}^* + \mat{Y}^* - \mat{X} \|_F^2\\
                                        &= \| \mA\^{t+1} - \mat{Y}^* \|_F^2 + \| \mat{Y}^* - \mat{X} \|_F^2 + 2 \langle \mA\^{t+1} - \mat{Y}^*, \mat{Y}^* - \mat{X} \rangle_F\\
                                        &\leq \| \mat{Y}^* - \mat{X} \|_F^2 + 2 \langle \mA\^{t+1} - \mat{Y}^*, \mA\^{t+1} - \mat{Y}^* + \mat{Y}^* - \mat{X} \rangle_F\\
                                        &= \| \mat{Y}^* - \mat{X} \|_F^2 + 2 \langle \mA\^{t+1} - \mat{Y}^*, \mA\^{t+1} - \mat{X} \rangle_F\\
                                        &\leq \| \mat{Y}^* - \mat{X} \|_F^2 + 2 \| \mA\^{t+1} - \mat{Y}^* \|_F \| \mA\^{t+1} - \mat{X} \|_F\\
                                        &\leq \| \mat{Y}^* - \mat{X} \|_F^2 + 2 D \epsilon\^t.
        \]

        The Lemma follows by combining the previous two inequalities.
    \end{proof}

    If we set $\epsilon\^t = 1 / t^{5/2}$ then for our choice of $\eta\^t = \frac{D}{L \sqrt{t}}$, the error term becomes $L / t^2$. Summing over $T$ timesteps we thus get an additive error of $O(L) = O(|\Seqs|)$ in the regret, and our total linear-swap regret bound remains $O(|\Seqs|^2 \sqrt{T})$.
    
    We now move to analyzing the per-iteration time complexity of \cref{algo:ours}. The most computationally heavy steps are \eqref{algo:line:proj} and \eqref{algo:line:fixed}. To compute the fixed point at step \eqref{algo:line:fixed} we can use any polynomial-time LP algorithm. We can also perform the projection step \eqref{algo:line:proj} in polynomial time using the ellipsoid method \citep{Vishnoi2021:Algorithms}. For this we reduce it to a suitable Semidefinite Program with $O(|\Seqs|^2)$ variables and use the Cholesky factorization \citep{Gartner2014:Approximation} as the separation oracle, thus responding to separation queries in $O(|\Seqs|^6)$ time. Note that it is possible to guarantee $\| \mA\^{t+1} - \mat{Y}^* \|_F^2 \leq \epsilon\^t$ as the ellipsoid method can output a point $\mA\^{t+1} \in \mtrans\cQ\cQ$ such that $\| \mA\^t - \eta\^t \mat{L}\^t - \mA\^{t+1} \|_F \leq \| \mA\^t - \eta\^t \mat{L}\^t - \mat{Y}^* \|_F + \epsilon$ and furthermore, the Frobenius norm is a 2-strongly convex function.

    To apply the ellipsoid method we further need bounds on the Frobenius norm of $\mat{Y} \in \mtrans\cQ\cQ$ and the maximum projection distance. For the norm of $\mat{Y}$, we pick an upper bound of $R = D = |\Seqs|$. For the lower bound $r$ we note that $\mat{Y} \vx \in \cQ$ for all $\vx \in \cQ$, which implies $\mat{Y}[\emptyseq] \vx = 1 \implies \| \mat{Y}[\emptyseq] \|_1 \geq 1$. Thus we get $r \geq \frac{1}{|\Seqs|}$. Similarly, we bound the projection distance between $0$ and $D$. Based on these bounds and on Theorem 13.1 from \cite{Vishnoi2021:Algorithms} we conclude that the total per-iteration time complexity of \cref{algo:ours} is $O\left(|\Seqs|^{10} \log(|\Seqs|) \log^2(t) \right)$.
\end{proof}

\thmhardness*
\begin{proof}
We use the exact same argument employed in the paper by \cite{vonStengel2008} by reducing SAT to the MAXPAY-\oureq problem. Specifically, for each instance of SAT having $n$ clauses and $m$ variables we construct a two-player extensive-form game of size polynomial in $n$ and $m$. In the beginning, there is a chance move that picks one of $n$ possible actions uniformly at random -- one for each SAT clause. Then is the turn of Player~2 who has $n$ distinct singleton information sets corresponding to the chance node actions, and respectively to the $n$ clauses of the SAT instance. Let $L_i$ be the set of literals (negated or non-negated variables) included in the $i$-th clause of the SAT instance. In information set $i$ of Player~2 there exist $|L_i|$ actions, one for each literal in $L_i$. Finally, each literal leads to a different decision node for Player~1 who has as many decision nodes as the number of literals in the SAT instance, and in each node there exist exactly 2 possible actions: TRUE and FALSE. However, Player~1 only has $m$ information sets corresponding to the $m$ SAT variables with each information set $x$ grouping together all nodes corresponding to literals of the variable $x$. This way, Player~1 only chooses the truth value of a variable without knowing from which literal Player~2 has picked this variable. The utilities for both players are equal to 1 if the truth value picked by Player~1 satisfies the literal picked by Player~2, and both utilities are 0 otherwise.

In this game, there exists a pure strategy attaining payoff $1$ for each player if and only if the SAT instance is satisfiable -- namely Player~1 always acts based on the satisfying assignment and Player~2 picks for every clause a literal that is known to be TRUE. Otherwise, the maximum payoff for each pure strategy is at most $1 - 1/n$. Given that a \oureq describes a convex combination of pure strategies, it follows that the maximum total expected payoff of the the sum of the two players' utilities will either be $2$ when the SAT instance is satisfiable, or it will be at most $2(1 - 1/n)$ when the instance is not satisfiable. Additionally, this holds not just for the sum but for any linear combination of player utilities. Thus, the problem of deciding whether MAXPAY-\oureq can attain a value of at least $k$ is NP-hard for any linear combination of utilities.
\end{proof}

\section{Examples}

\begin{example}[EFCE $\neq$ \oureq] 
\label{ex:efce_lce}
Consider the following 2-player signaling game presented by \cite{vonStengel2008}

\NewDocumentCommand \SingletonInfoset {O{4mm}m} {
    \Infoset[#1]{(#2.center) -- +(0.0001,0)}
}
\newcommand{\stack}[2]{\parbox[c]{6mm}{\small\centering#1\\#2}}

\NewDocumentCommand \Infoset {O{4mm}m} {
    \begin{pgfonlayer}{background}
        \draw[black!40, line width/.evaluated={#1}, line cap=round] #2;
        \draw[white, line width/.evaluated={.9*#1}, line cap=round] #2;
    \end{pgfonlayer}
    \begin{pgfonlayer}{middle}
        \draw[black!10, draw opacity=1.00, line width/.evaluated={.9*#1}, line cap=round] #2;
    \end{pgfonlayer}
}

\begin{figure}[H]
    \centering
    \begin{tikzpicture}[baseline=0pt]
    \tikzstyle{chance} = [fill=white, draw=black, circle, semithick,cross,inner sep=.8mm]
\tikzstyle{pl1}    = [fill=black, draw=black, circle, inner sep=.5mm]
\tikzstyle{pl2}    = [fill=white, draw=black, circle, inner sep=.55mm]
\tikzstyle{termina} = [fill=white, draw=black,  inner sep=.7mm]
\tikzstyle  {decpt} = [fill=black, draw=black,         inner sep=1.0mm,rounded corners=.3mm]
\tikzstyle {action} = [
    semithick, black, -{Stealth[scale width=1.10,inset=.3mm,length=1.55mm]},
    preaction={
        draw,line width=1.0mm,white,-{Stealth[scale width=1.10,inset=.3mm,length=1.55mm]}
    }
]

        \def\done{.9*1.6}
        \def\dtwo{.45*1.6}
        \def\dleaf{.25*1.6}
        \def\dvert{-.8*1.5}
        \contourlength{.6mm}
        \def\Xseq#1{\scalebox{.8}{\contour{white}{\seq{#1}}}}

        \node[chance] (C) at (0, 0) {};
        \node[pl1] (G) at ($(-\done,\dvert)$) {};
        \node[pl1] (B) at ($(\done,\dvert)$) {};
        
        \node[pl2] (X1) at ($(G) + (-\dtwo, 1.5*\dvert)$) {};
        \node[pl2] (X2) at ($(G) + (\dtwo, 1.5*\dvert)$) {};
        \node[termina,label=below:\stack{4}{10}] (l1) at ($(X1) + (-\dleaf, \dvert)$) {};
        \node[termina,label=below:\stack{0}{6}] (l2) at ($(X1) + (\dleaf, \dvert)$) {};
        \node[termina,label=below:\stack{6}{0}] (l3) at ($(X2) + (-\dleaf, \dvert)$) {};
        \node[termina,label=below:\stack{0}{6}] (l4) at ($(X2) + (\dleaf, \dvert)$) {};

        \node[pl2] (Y1) at ($(B) + (-\dtwo, 1.5*\dvert)$) {};
        \node[pl2] (Y2) at ($(B) + (\dtwo, 1.5*\dvert)$) {};
        \node[termina,label=below:\stack{4}{10}] (l5) at ($(Y1) + (-\dleaf, \dvert)$) {};
        \node[termina,label=below:\stack{0}{6}] (l6) at ($(Y1) + (\dleaf, \dvert)$) {};
        \node[termina,label=below:\stack{6}{0}] (l7) at ($(Y2) + (-\dleaf, \dvert)$) {};
        \node[termina,label=below:\stack{0}{6}] (l8) at ($(Y2) + (\dleaf, \dvert)$) {};

        \draw[action] (C)  -- node{\Xseq{1/2}} (G);
        \draw[action] (C)  -- node{\Xseq{1/2}} (B);
        \draw[action] (X1)  -- node{\Xseq{$l_X$}} (l1);
        \draw[action] (X1)  -- node{\Xseq{$r_X$}} (l2);
        \draw[action] (X2)  -- node{\Xseq{$l_X$}} (l3);
        \draw[action] (X2)  -- node{\Xseq{$r_X$}} (l4);
        \draw[action] (Y1) -- node{\Xseq{$l_Y$}} (l5);
        \draw[action] (Y1) -- node[fill=white,inner sep=0mm]{\Xseq{$r_Y$}} (l6);
        \draw[action] (Y2) -- node{\Xseq{$l_Y$}} (l7);
        \draw[action] (Y2) -- node{\Xseq{$r_Y$}} (l8);
        \draw[action] (G) -- node{\Xseq{$X_G$}} (X1);
        \draw[action] (G) -- node{\Xseq{$Y_G$}} (Y1);
        \draw[action] (B) -- node{\Xseq{$X_B$}} (X2);
        \draw[action] (B) -- node{\Xseq{$Y_B$}} (Y2);

        \begin{pgfonlayer}{background}
            \SingletonInfoset{G}
            \node[infoset, left=0.3mm of G] {\scalebox{.8}{\decpt{G}}};
            \SingletonInfoset{B}
            \node[infoset, right=0.3mm of B] {\scalebox{.8}{\decpt{B}}};

            \Infoset{(X1.center) -- (X2.center)}
            \node[infoset]  at ($(X1) - (.4, 0)$) {\scalebox{.8}{\decpt{x}}};

            \Infoset{(Y1.center) -- (Y2.center)}
            \node[infoset]  at ($(Y2) + (.4, 0)$) {\scalebox{.8}{\decpt{y}}};
        \end{pgfonlayer}

        \iftrue
            \begin{scope}[xshift=4cm,yshift=-1cm]
                \draw[gray,rounded corners] (-.25, -.70) rectangle (4.4, .7) node[fitting node] (R){};
                \path (0, 0.40)   node[pl1]    {} node[black!90,right]      {\scalebox{.75}{Player~1}};
                \path (1.7, 0.40) node[chance] {} node[black!90,right=.5mm] {\scalebox{.75}{Nature}};
                \path (0, 0.0)    node[pl2]    {} node[black!90,right]      {\scalebox{.75}{Player~2}};
                \path (1.7, 0.0)  node[termina]{} node[black!90,right=.5mm] {\scalebox{.75}{Terminal node}};
                \begin{scope}[yshift=-4mm]
                    \node[gray] at (0.03,0) {\scalebox{.7}{\decpt Y}};
                    \Infoset{(0.38,0) -- (0.8,0)}\end{scope}
                \node[black!90,anchor=west] at (0.95,-.40) {\scalebox{.75}{Information set}};
                \node[rotate=90,gray] at (-.4, 0.0) {\scalebox{.75}{\textsl{Legend}}};
            \end{scope}
        \fi
    \end{tikzpicture}
\end{figure}

\renewcommand{\arraystretch}{2.5}

Below are the normal-form payoff matrix of the signaling game (left) and the extensive-form correlated equilibrium (right) given by \cite{vonStengel2008}:

\begin{figure}[H]\centering
    \begin{minipage}{6cm}\centering
        \begin{tabular}{c|cccc|}
        \multicolumn{1}{r|}{\diagbox[trim=l, width=4em, height=3\line]{\textcolor{blue}{1}}{\textcolor{red}{2}}} & \multicolumn{1}{c}{$l_X l_Y$} & \multicolumn{1}{c}{$l_X r_Y$} & \multicolumn{1}{c}{$r_X l_Y$} & \multicolumn{1}{c}{$r_X r_Y$} \\
        \hline
        $X_G X_B$ & \textcolor{blue}{5}, \textcolor{red}{5} & \textcolor{blue}{5}, \textcolor{red}{5} & \textcolor{blue}{0}, \textcolor{red}{6} & \textcolor{blue}{0}, \textcolor{red}{6} \\
        $X_G Y_B$ & \textcolor{blue}{5}, \textcolor{red}{5} & \textcolor{blue}{2}, \textcolor{red}{8} & \textcolor{blue}{3}, \textcolor{red}{3} & \textcolor{blue}{0}, \textcolor{red}{6} \\
        $Y_G X_B$ & \textcolor{blue}{5}, \textcolor{red}{5} & \textcolor{blue}{3}, \textcolor{red}{3} & \textcolor{blue}{2}, \textcolor{red}{8} & \textcolor{blue}{0}, \textcolor{red}{6} \\
        $Y_G Y_B$ & \textcolor{blue}{5}, \textcolor{red}{5} & \textcolor{blue}{0}, \textcolor{red}{6} & \textcolor{blue}{5}, \textcolor{red}{5} & \textcolor{blue}{0}, \textcolor{red}{6} \\
        \cline{2-5}
        \end{tabular}
    \end{minipage}\hspace{1.8cm}
    \begin{minipage}{6cm}
        \begin{tabular}{c|cccc|}
        \multicolumn{1}{r|}{} & \multicolumn{1}{c}{$l_X l_Y$} & \multicolumn{1}{c}{$l_X r_Y$} & \multicolumn{1}{c}{$r_X l_Y$} & \multicolumn{1}{c}{$r_X r_Y$} \\
        \hline
        $X_G X_B$ & 0 & 1/4 & 0 & 0 \\
        $X_G Y_B$ & 0 & 1/4 & 0 & 0 \\
        $Y_G X_B$ & 0 & 0 & 1/4 & 0 \\
        $Y_G Y_B$ & 0 & 0 & 1/4 & 0 \\
        \cline{2-5}
        \end{tabular}
    \end{minipage}
\end{figure}

However, we can observe that this EFCE is \textit{not} a \oureqfull because, for example, Player 1 can increase their payoff by a value of $3/2$ using the following transformation:
\[
    X_G X_B &\mapsto X_G X_B &\qquad\text{i.e., map the reduced-normal-form plan } (1, 0, 1, 0) \mapsto (1, 0, 1, 0)\\
    X_G Y_B &\mapsto X_G X_B &\qquad\text{i.e., map the reduced-normal-form plan } (1, 0, 0, 1) \mapsto (1, 0, 1, 0)\\
    Y_G X_B &\mapsto Y_G Y_B &\qquad\text{i.e., map the reduced-normal-form plan } (0, 1, 1, 0) \mapsto (0, 1, 0, 1)\\
    Y_G Y_B &\mapsto Y_G Y_B &\qquad\text{i.e., map the reduced-normal-form plan } (0, 1, 0, 1) \mapsto (0, 1, 0, 1)
\]
(Above, we have implicitly assumed that the strategy vectors encode probability of actions in the arbitrary order $X_G$, $Y_G$, $X_B$, $Y_B$).
The above transformation is linear, since it can be represented via the matrix
\[
\setlength{\arraycolsep}{1mm}
\def\arraystretch{1.1}
\begin{pmatrix}
    1 & 0 & 0 & 0 \\
    0 & 1 & 0 & 0 \\
    1 & 0 & 0 & 0 \\
    0 & 1 & 0 & 0 
\end{pmatrix}.
\]

This would swap the pure strategy $X_G Y_B$ with $X_G X_B$ and strategy $Y_G X_B$ with $Y_G Y_B$. Crucially, the strategy at the subtree of information set $B$ is determined by the strategy at the subtree of information set $G$.
Thus, the transformed value for each sequence does not purely depend on the ancestors of that sequence, but can also depend on strategies belonging to ``sibling'' subtrees.
Hence, this example proves that $\text{\oureq} \neq \text{EFCE}$.

In fact, we can even verify that in this toy example the set of \oureq coincides with the set of all normal-form CE. We refer the interested reader to \cite{vonStengel2008} for a more detailed discussion of the different behaviors that players can exhibit at an EFCE vs a normal-form CE, which in this specific example also corresponds to a comparison between the EFCE and the \oureq.
\end{example}

\begin{remark}\label{rmrk:behavioral_linear}
    The linear transformation given in \cref{ex:efce_lce} also serves to show that the Behavioral Deviations defined in \cite{Morrill2021:Efficient} are not a superset of all linear transformations. Additionally, behavioral deviations are not a subset of linear transformations, as the latter act only on reduced strategies. Thus the two sets of deviations are incomparable.
\end{remark}

\begin{example}[\oureq $\neq$ CE]
\label{ex:ce_lce}
This example was found through computational search. Consider the 2-player game with the following game tree:

\NewDocumentCommand \SingletonInfoset {O{4mm}m} {
    \Infoset[#1]{(#2.center) -- +(0.0001,0)}
}
\newcommand{\stack}[2]{\parbox[c]{6mm}{\small\centering#1\\#2}}

\NewDocumentCommand \Infoset {O{4mm}m} {
    \begin{pgfonlayer}{background}
        \draw[black!40, line width/.evaluated={#1}, line cap=round] #2;
        \draw[white, line width/.evaluated={.9*#1}, line cap=round] #2;
    \end{pgfonlayer}
    \begin{pgfonlayer}{middle}
        \draw[black!10, draw opacity=1.00, line width/.evaluated={.9*#1}, line cap=round] #2;
    \end{pgfonlayer}
}

\begin{figure}[H]
    \centering
    \begin{tikzpicture}[baseline=0pt]
    \tikzstyle{chance} = [fill=white, draw=black, circle, semithick,cross,inner sep=.8mm]
\tikzstyle{pl1}    = [fill=black, draw=black, circle, inner sep=.5mm]
\tikzstyle{pl2}    = [fill=white, draw=black, circle, inner sep=.55mm]
\tikzstyle{termina} = [fill=white, draw=black,  inner sep=.7mm]
\tikzstyle  {decpt} = [fill=black, draw=black,         inner sep=1.0mm,rounded corners=.3mm]
\tikzstyle {action} = [
    semithick, black, -{Stealth[scale width=1.10,inset=.3mm,length=1.55mm]},
    preaction={
        draw,line width=1.0mm,white,-{Stealth[scale width=1.10,inset=.3mm,length=1.55mm]}
    }
]

        \def\done{.9*2}
        \def\dtwo{.45*2}
        \def\dleaf{.25*2}
        \def\dvert{-.8*1.6}
        \contourlength{.6mm}
        \def\Xseq#1{\scalebox{.8}{\contour{white}{\seq{#1}}}}

        \node[chance] (CH) at (0, 0) {};
        \node[pl1] (A1) at ($(-3*\done,\dvert)$) {};
        \node[pl1] (A2) at ($(-\done,\dvert)$) {};
        \node[pl1] (B1) at ($(\done,\dvert)$) {};
        \node[pl1] (B2) at ($(3*\done,\dvert)$) {};

        \node[pl2] (Q1) at ($(A1) + (-\dtwo, 1.5*\dvert)$) {};
        \node[pl2] (Q2) at ($(A1) + (\dtwo, 1.5*\dvert)$) {};
        \node[pl2] (Q3) at ($(A2) + (-\dtwo, 1.5*\dvert)$) {};
        \node[pl2] (Q4) at ($(A2) + (\dtwo, 1.5*\dvert)$) {};

        \node[pl2] (W1) at ($(B1) + (-\dtwo, 1.5*\dvert)$) {};
        \node[pl2] (W2) at ($(B1) + (\dtwo, 1.5*\dvert)$) {};
        \node[pl2] (W3) at ($(B2) + (-\dtwo, 1.5*\dvert)$) {};
        \node[pl2] (W4) at ($(B2) + (\dtwo, 1.5*\dvert)$) {};
        
        \node[termina,label=below:\stack{0}{0}] (l1) at ($(Q1) + (-\dleaf, \dvert)$) {};
        \node[termina,label=below:\stack{-3}{0}] (l2) at ($(Q1) + (\dleaf, \dvert)$) {};
        \node[termina,label=below:\stack{-2}{0}] (l3) at ($(Q2) + (-\dleaf, \dvert)$) {};
        \node[termina,label=below:\stack{0}{0}] (l4) at ($(Q2) + (\dleaf, \dvert)$) {};
        \node[termina,label=below:\stack{0}{0}] (l5) at ($(Q3) + (-\dleaf, \dvert)$) {};
        \node[termina,label=below:\stack{-310}{-315}
        ] (l6) at ($(Q3) + (\dleaf, \dvert)$) {};
        \node[termina,label=below:\stack{0}{0}] (l7) at ($(Q4) + (-\dleaf, \dvert)$) {};
        \node[termina,label=below:\stack{0}{0}] (l8) at ($(Q4) + (\dleaf, \dvert)$) {};

        \node[termina,label=below:\stack{0}{2}] (l9) at ($(W1) + (-\dleaf, \dvert)$) {};
        \node[termina,label=below:\stack{0}{0}] (l10) at ($(W1) + (\dleaf, \dvert)$) {};
        \node[termina,label=below:\stack{0}{0}] (l11) at ($(W2) + (-\dleaf, \dvert)$) {};
        \node[termina,label=below:\stack{2}{4}] (l12) at ($(W2) + (\dleaf, \dvert)$) {};
        \node[termina,label=below:\stack{0}{0}] (l13) at ($(W3) + (-\dleaf, \dvert)$) {};
        \node[termina,label=below:\stack{0}{0}] (l14) at ($(W3) + (\dleaf, \dvert)$) {};
        \node[termina,label=below:\stack{202}{0}] (l15) at ($(W4) + (-\dleaf, \dvert)$) {};
        \node[termina,label=below:\stack{0}{0}] (l16) at ($(W4) + (\dleaf, \dvert)$) {};

        \draw[action] (CH)  -- node{\Xseq{1/4}} (A1);
        \draw[action] (CH)  -- node{\Xseq{1/4}} (A2);
        \draw[action] (CH)  -- node{\Xseq{1/4}} (B1);
        \draw[action] (CH)  -- node{\Xseq{1/4}} (B2);

        \draw[action] (Q1) -- node{\Xseq{$Q_l$}} (l1);
        \draw[action] (Q1) -- node{\Xseq{$Q_r$}} (l2);
        \draw[action] (Q2) -- node{\Xseq{$Q_l$}} (l3);
        \draw[action] (Q2) -- node{\Xseq{$Q_r$}} (l4);
        \draw[action] (Q3) -- node{\Xseq{$Q_l$}} (l5);
        \draw[action] (Q3) -- node{\Xseq{$Q_r$}} (l6);
        \draw[action] (Q4) -- node{\Xseq{$Q_l$}} (l7);
        \draw[action] (Q4) -- node{\Xseq{$Q_r$}} (l8);

        \draw[action] (W1) -- node{\Xseq{$W_l$}} (l9);
        \draw[action] (W1) -- node{\Xseq{$W_r$}} (l10);
        \draw[action] (W2) -- node{\Xseq{$W_l$}} (l11);
        \draw[action] (W2) -- node{\Xseq{$W_r$}} (l12);
        \draw[action] (W3) -- node{\Xseq{$W_l$}} (l13);
        \draw[action] (W3) -- node{\Xseq{$W_r$}} (l14);
        \draw[action] (W4) -- node{\Xseq{$W_l$}} (l15);
        \draw[action] (W4) -- node{\Xseq{$W_r$}} (l16);

        \draw[action] (A1) -- node{\Xseq{$A_1$}} (Q1);
        \draw[action] (A1) -- node{\Xseq{$A_2$}} (W1);

        \draw[action] (A2) -- node{\Xseq{$A_2$}} (Q2);
        \draw[action] (A2) -- node{\Xseq{$A_1$}} (W2);

        \draw[action] (B1) -- node{\Xseq{$B_1$}} (Q3);
        \draw[action] (B1) -- node{\Xseq{$B_2$}} (W3);

        \draw[action] (B2) -- node{\Xseq{$B_2$}} (Q4);
        \draw[action] (B2) -- node{\Xseq{$B_1$}} (W4);

        \begin{pgfonlayer}{background}

            \Infoset{(A1.center) -- (A2.center)}
            \node[infoset]  at ($(A1) - (.4, 0)$) {\scalebox{.8}{\decpt{a}}};

            \Infoset{(B1.center) -- (B2.center)}
            \node[infoset]  at ($(B2) + (.4, 0)$) {\scalebox{.8}{\decpt{b}}};

            \Infoset{(Q1.center) -- (Q4.center)}
            \node[infoset]  at ($(Q1) - (.4, 0)$) {\scalebox{.8}{\decpt{q}}};

            \Infoset{(W1.center) -- (W4.center)}
            \node[infoset]  at ($(W4) + (.4, 0)$) {\scalebox{.8}{\decpt{w}}};
        \end{pgfonlayer}

        \iftrue
            \begin{scope}[xshift=3cm,yshift=1cm]
                \draw[gray,rounded corners] (-.25, -.70) rectangle (4.4, .7) node[fitting node] (R){};
                \path (0, 0.40)   node[pl1]    {} node[black!90,right]      {\scalebox{.75}{Player~1}};
                \path (1.7, 0.40) node[chance] {} node[black!90,right=.5mm] {\scalebox{.75}{Nature}};
                \path (0, 0.0)    node[pl2]    {} node[black!90,right]      {\scalebox{.75}{Player~2}};
                \path (1.7, 0.0)  node[termina]{} node[black!90,right=.5mm] {\scalebox{.75}{Terminal node}};
                \begin{scope}[yshift=-4mm]
                    \node[gray] at (0.03,0) {\scalebox{.7}{\decpt Y}};
                    \Infoset{(0.38,0) -- (0.8,0)}\end{scope}
                \node[black!90,anchor=west] at (0.95,-.40) {\scalebox{.75}{Information set}};
                \node[rotate=90,gray] at (-.4, 0.0) {\scalebox{.75}{\textsl{Legend}}};
            \end{scope}
        \fi
    \end{tikzpicture}
\end{figure}

Based on the previous game tree, the normal-form payoff matrix of the game is shown below.

\renewcommand{\arraystretch}{2.5}

\begin{center}\normalfont
\begin{tabular}{c|cccc|}
\multicolumn{1}{r|}{\diagbox[trim=l, width=4em, height=3\line]{\textcolor{blue}{1}}{\textcolor{red}{2}}} & \multicolumn{1}{c}{$Q_l W_l$} & \multicolumn{1}{c}{$Q_l W_r$} & \multicolumn{1}{c}{$Q_r W_l$} & \multicolumn{1}{c}{$Q_r W_r$} \\
\hline
$A_1 B_1$ & \textcolor{blue}{50.5}, \textcolor{red}{ 0.0 } & \textcolor{blue}{50.5}, \textcolor{red}{ 0.25 } & \textcolor{blue}{-27.75}, \textcolor{red}{ -78.75 } & \textcolor{blue}{-27.75}, \textcolor{red}{ -78.5} \\
$A_1 B_2$ & \textcolor{blue}{0.0}, \textcolor{red}{ 0.0} &   \textcolor{blue}{0.5}, \textcolor{red}{ 1.0 } &    \textcolor{blue}{-0.75}, \textcolor{red}{ 0.0 } &      \textcolor{blue}{-0.25}, \textcolor{red}{ 1.0} \\
$A_2 B_1$ & \textcolor{blue}{50.0}, \textcolor{red}{ 0.5 } & \textcolor{blue}{49.5}, \textcolor{red}{ -0.75 } & \textcolor{blue}{-27.0}, \textcolor{red}{ -78.25 } &   \textcolor{blue}{-27.5}, \textcolor{red}{ -79.5} \\
$A_2 B_2$ & \textcolor{blue}{-0.5}, \textcolor{red}{ 0.5 } & \textcolor{blue}{-0.5}, \textcolor{red}{ 0.0 } &   \textcolor{blue}{0.0}, \textcolor{red}{ 0.5 } &        \textcolor{blue}{0.0}, \textcolor{red}{ 0.0} \\
\cline{2-5}
\end{tabular}
\end{center}

We can now verify that the following is a \oureqfull for this game. The verifcation can be done computationally by expressing all constraints of \cref{thm:charac} as a Linear Program.

\begin{center}\normalfont
\begin{tabular}{c|cccc|}
\multicolumn{1}{r|}{} & \multicolumn{1}{c}{$Q_l W_l$} & \multicolumn{1}{c}{$Q_l W_r$} & \multicolumn{1}{c}{$Q_r W_l$} & \multicolumn{1}{c}{$Q_r W_r$} \\
\hline
$A_1 B_1$ & 1/5 & 0 & 0 & 0\\
$A_1 B_2$ & 0 & 1/5 & 0 & 0\\
$A_2 B_1$ & 1/5 & 0 & 0 & 0\\
$A_2 B_2$ & 0 & 0 & 1/5 & 1/5\\
\cline{2-5}
\end{tabular}
\end{center}

However, the swap $\{ A_1 B_2 \mapsto A_1 B_1, A_2 B_1 \mapsto A_1 B_1 \}$ can increase player 1's payoff by $50.5$.

Furthermore, we can verify that this swap is \textit{not} linear as follows. First assume that it was linear and could be written as a matrix $\mA \in [0, 1]^{\Seqs \times \Seqs}$. Then the matrix has to be consistent with the following transformations:
\[
    A_1 B_1 &\mapsto A_1 B_1\\
    A_1 B_2 &\mapsto A_1 B_1\\
    A_2 B_1 &\mapsto A_1 B_1\\
    A_2 B_2 &\mapsto A_2 B_2
\]

For convenience of referring to matrix rows and columns we number the four sequences as:
\[
A_1: 0,\quad A_2: 1,\quad B_1: 2,\quad B_2: 3
\]

If we focus on the second transformation, $A_1 B_2 \mapsto A_1 B_1$, we conclude that $\mA[:, 0] + \mA[:, 3] = (1, 0, 1, 0)^\top$ which implies that $\mA[1, 0] = \mA[3, 0] = \mA[1, 3] = \mA[3, 3] = 0$. Now, if we subtract the respective equations of the last two swaps we get $\mA[:, 2] - \mA[:, 3] = (1, -1, 1, -1)^\top$. Since $\mA \in [0, 1]^{\Seqs \times \Seqs}$, the last equation implies $\mA[1, 3] = 1$ which contradicts the previous constraint of $\mA[1, 3] = 0$. Thus, we have found a valid normal-form swap that cannot be expressed as a linear transformation of sequence-form strategies.
Hence, this example proves that $\text{\oureq} \neq \text{CE}$.
\end{example}

\section{Details on Empirical Evaluation}
\label{app:experimental details}

In this section we provide details about the implementation of our algorithm, as well as the compute resources and game instances used.

\paragraph{Implementation of our no-linear-swap-regret dynamics}
We implemented our no-linear-swap-regret algorithm (\Cref{algo:ours}) in the C++ programming language using the Gurobi commercial optimization solver \citep{gurobi}, version 10. We use Gurobi for the following purposes.
\begin{itemize}
    \item To compute the projection needed on \Cref{algo:line:proj} of \Cref{algo:ours}. We remark that while Gurobi is typically recognized as a linear and integer linear programming solver, modern versions include tuned code for convex quadratic programming. In particular, we used the barrier algorithm to compute the Euclidean projections onto the polytope $\mtrans\cQ\cQ$ required at every iteration of our algorithm.
    \item To compute the fixed points of the matrices $\mA \in \mtrans\cQ\cQ$, that is, finding $\cQ \ni \vx = \mA\vx$. As discussed in \Cref{sec:fp} this is a polynomially-sized linear program.
    \item To measure the linear-swap regret incurred after any $T$ iterations, which is plotted on the y-axes of \Cref{fig:experiments}. This corresponds to solving the linear optimization problem
    \[
        \max_{\mA \in \mtrans\cQ\cQ} \left\{\frac{1}{T}\sum_{t=1}^T \langle \vell\^t, \vx\^t - \mA\vx\^t \rangle\right\}.
    \]
\end{itemize}

We did very minimal tuning of the constant learning rate $\eta$ used for online projected gradient descent, trying values $\eta \in \{0.05, 0.1, 0.5\}$ (we remark that a constant value of $\eta \approx 1/\sqrt{T}$ is theoretically sound). We found that $\eta = 0.1$, which is used in the plots of \Cref{fig:experiments}, performed best.

\paragraph{Implementation of no-trigger-regret dynamics}
We implemented the no-trigger-regret algorithm of \citet{Farina22:Simple} in the C++ programming language. In this case, there is no need to use Gurobi, since, as the original authors show, the polytope of trigger deviation functions admits a convenient combinatorial characterization that enables us to sidestep linear programming. Rather, we implemented the algorithm and the computation of the trigger regret directly leveraging the combinatorial structure.

\paragraph{Computational resources used} Minimal computational resources were used. All code ran on a personal laptop for roughly 12 hours.

\paragraph{Game instance used} We ran our code on the standard benchmark game of Kuhn poker \cite{Kuhn50:Simplified}. We used a three-player variant of the game. Compared to the original game, which only considers a simplified deck make of cards out of only three possible ranks (Jack, Queen, or Kind), we use a full deck of 13 possible card ranks. 
The game has 156 information sets, 315 sequences, and 22308 terminal states.

\section{Further Remarks on the Reduction from No-$\Phi$-Regret to External Regret}

A no-$\Phi$-regret algorithm is typically defined as outputting \emph{deterministic} behavior from a finite set $\cX$ (for example, deterministic reduced-normal-form plans in extensive-form games, or actions in normal-form games), which can be potentially sampled at random. This is the setting used by, for example, \citet{hart2000simple} and \citet{Farina22:Simple}. However, we remark that when the transformations $\phi \in \Phi$ are linear, any such device can be constructed starting from an algorithm that outputs points $\vx' \in \cX' \defeq \textrm{conv}(\cX)$, and then sampling $\vx$ unbiasedly in accordance with $\vx'$, that is, so that $\mathbb{E}[\vx] = \vx'$. The reason why this is useful is that constructing the latter object is usually simpler, as $\cX'$ is a closed and convex set, and is therefore amenable to the wide array of online optimization techniques that have been developed over the years. 

More formally, let $\Phi$ be a set of linear transformations that map $\cX$ to itself. A no-$\Phi$-regret algorithm for $\cX' = \textrm{conv}(\cX)$ guarantees that, no matter the sequence of loss vectors $\vell\^t$,
\[
    R'{}\^T = \max_{\phi\in\Phi} \sum_{t=1}^T \langle\vell\^t, \vx'{}\^t - \phi(\vx'{}\^t)\rangle
\]
grows sublinearly. Consider now an algorithm that, after receiving $\vx'{}\^t\in\textrm{conv}(\cX)$, samples $\vx\^t\in\cX$ unbiasedly, that is, so that $\mathbb{E}[\vx\^t] = \vx'{}\^t$. Then, by linearity of the transformations and using the Azuma-Hoeffind concentration inequality, we obtain that for any $\epsilon > 0$,
\[
    \mathbb{P}\left[\left|R'{}\^T - \max_{\phi\in\Phi} \sum_{t=1}^T \langle\vell\^t, \vx\^t - \phi(\vx\^t)\rangle\right| \le \Theta\left(\sqrt{T\log \frac{1}{\epsilon}}\right) \right] \ge 1 - \epsilon,
\]
where the big-theta notation hides constants that depend on the payoff range of the game and the diameter of $\cX$ (a polynomial quantity in the game tree size).
This shows that as long as the regret of the no-$\Phi$-algorithm that operates over $\cX'$ is sublinear,  then so is that of an algorithm that outputs points on $\cX$ by sampling unbiasedly from $\cX$. We refer the interested reader to Section 4.2 (``From Deterministic to Mixed Strategies'') of \citet{Farina22:Simple}.
\end{document}